%% file: main.tex
\documentclass{article}
\input{head}

\usepackage{todonotes}

\title{Quadratic Probing Insertions Are $\epsilon^{-(1+o(1))}$ Time}
\author{
  Yang Hu\thanks{Carnegie Mellon University. \texttt{yanghu2@andrew.cmu.edu}}
  \and
  William Kuszmaul\thanks{Carnegie Mellon University. \texttt{kuszmaul@cmu.edu}}
  \and
  Jingxun Liang\thanks{Carnegie Mellon University. \texttt{jingxunl@andrew.cmu.edu}}
  \and
  Stefan Walzer\thanks{Karlsruhe Institute of Technology. \texttt{stefan.walzer@kit.edu}}
  \and
  Huacheng Yu\thanks{Princeton University. \texttt{hy2@cs.princeton.edu}}
  \and
  Renfei Zhou\thanks{Carnegie Mellon University. \texttt{renfeiz@andrew.cmu.edu}}
}\date{}

\begin{document}

\maketitle

\begin{abstract}
First proposed in 1968, quadratic probing has stood for more than half a century as one of the simplest and most widely used hash-table designs in computer science. It is conjectured that, at load factor $1 - \epsilon$, the hash table achieves $O(\epsilon^{-1})$ expected insertion time. But even proving a bound of the form $f(\epsilon^{-1})$ for \emph{any} function $f$ has remained open. 

In this paper, we prove that the expected insertion time is $\epsilon^{-(1 + o(1))}$. This settles the complexity of the data structure up to sub-polynomial factors in $\epsilon^{-1}$.
\end{abstract}

\section{Introduction}
\label{sec:introduction}
\input{introduction}

\section{Preliminaries}
\label{sec:prelim}

\input{prelim}

\section{Poissonization and Positive Association}
\label{sec:poissonization}
\input{poissonization}

\section{A Tail Bound from the Covariance Properties}
\label{sec:covariance-tail-bound}
\input{covariance-tail-bound}

\section{Bounding Insertion Time}
\label{sec:tail-bounds}
\input{tail-bounds}

\section{Completing the Analysis}
\label{sec:expected-time}
\input{expected-time}

\section{Acknowledgments}

William Kuszmaul, Jingxun Liang, and Renfei Zhou were partially supported by NSF grant CNS-2504471, by NSF CAREER award CCF-2542165, and by a Jane Street Research Grant. Huacheng Yu was supported by NSF CAREER award CCF-2339942. Renfei Zhou was partially supported by the Jane Street Graduate Research Fellowship and the MongoDB PhD Fellowship. Part of this work was initiated by Dagstuhl Seminar 25191``Adaptive and Scalable Data Structures''. 

Parts of this paper were written with the assistance of AI tools. The concentration bound in Section \ref{sec:covariance-tail-bound} and the number-theoretic properties of the quadratic sequence in Section \ref{sec:tail-bounds} were both derived with the assistance of AI. Additionally, parts of the analysis in this paper were influenced by earlier discussions with AI about related problems. 

\bibliography{reference}

\end{document}

%% file: head.tex
\usepackage[T1]{fontenc}
\usepackage[margin=1in]{geometry}
\usepackage{amsfonts,amsmath,amsthm,amssymb,mathtools}  %
\usepackage{cases}  %

\mathtoolsset{centercolon}
\usepackage{xfrac,nicefrac}
\usepackage{mathdots}
\usepackage{mleftright}  %
\let\left\mleft
\let\right\mright

\usepackage{xspace}
\xspaceaddexceptions{]\}}  %
\usepackage{regexpatch}

\usepackage{bm,bbm,dsfont}  %
\usepackage{caption}
\usepackage[normalem]{ulem}
\usepackage{enumitem}

\usepackage{booktabs}
\usepackage{graphicx}
\usepackage{float}
\usepackage{subcaption}  %
\usepackage{tcolorbox}
\usepackage{tikz}
\usetikzlibrary{decorations.pathreplacing}
\usetikzlibrary{calc}
\usetikzlibrary{positioning}
\usetikzlibrary{arrows.meta}
\usetikzlibrary{math}
\usetikzlibrary{patterns}
\usetikzlibrary{trees}

\usetikzlibrary{trees}

\usepackage[linesnumbered,boxed,ruled,vlined]{algorithm2e}

\SetCommentSty{mycommfont}

\usepackage{thmtools,thm-restate}
\usepackage[colorlinks,citecolor=blue,linkcolor=blue,urlcolor=red]{hyperref}

\theoremstyle{plain}

\newtheorem{theorem}{Theorem}[section]  %
\newtheorem{lemma}[theorem]{Lemma}

\newtheorem{observation}[theorem]{Observation}
\newtheorem{proposition}[theorem]{Proposition}
\newtheorem{corollary}[theorem]{Corollary}

\newtheorem{claim}[theorem]{Claim}

\theoremstyle{definition}  %

\newenvironment{proofof}[1]{\begin{proof}[Proof of #1]}{\end{proof}}

\AtBeginDocument{%
}

\usepackage[capitalise]{cleveref}
\crefname{algocf}{Algorithm}{Algorithms}
\Crefname{algocf}{Algorithm}{Algorithms}
\crefname{claim}{Claim}{Claims}
\Crefname{claim}{Claim}{Claims}
\crefname{statement}{Statement}{Statements}
\Crefname{statement}{Statement}{Statements}
\crefname{observation}{Observation}{Observations}
\Crefname{observation}{Observation}{Observations}
\crefname{fact}{Fact}{Facts}
\Crefname{fact}{Fact}{Facts}
\crefname{property}{Property}{Properties}
\Crefname{property}{Property}{Properties}
\crefname{conjecture}{Conjecture}{Conjectures}
\Crefname{conjecture}{Conjecture}{Conjectures}
\crefname{hypothesis}{Hypothesis}{Hypotheses}
\Crefname{hypothesis}{Hypothesis}{Hypotheses}
\crefname{problem}{Problem}{Problems}
\Crefname{problem}{Problem}{Problems}
\crefname{openproblem}{Open Question}{Open Questions}
\Crefname{openproblem}{Open Question}{Open Questions}

\newfloat{Distribution}{htbp}{loa}
\crefname{Distribution}{Distribution}{Distributions}
\Crefname{Distribution}{Distribution}{Distributions}
\newfloat{Protocol}{htbp}{loa}
\crefname{Protocol}{Protocol}{Protocols}
\Crefname{Protocol}{Protocol}{Protocols}

\SetKwProg{Function}{Function}{:}{}

\SetKwProg{CodeBlock}{}{}{}
\SetKwProg{Repeat}{Repeat}{:}{}

\DeclarePairedDelimiter{\ceil}{\lceil}{\rceil}
\DeclarePairedDelimiter{\floor}{\lfloor}{\rfloor}

\DeclarePairedDelimiter{\bk}{(}{)}
\DeclarePairedDelimiter{\Bk}{[}{]}
\DeclarePairedDelimiter{\BK}{\{}{\}}

\DeclarePairedDelimiter{\abs}{\lvert}{\rvert}

\DeclarePairedDelimiterX\mysetbase[2]{\lbrace}{\rbrace}{#1\,\delimsize\vert\,#2}
\NewDocumentCommand{\myset}{sO{}m m}{%
  \IfBooleanTF{#1}%
    {\mysetbase*{#3}{#4}}%
    {\mysetbase[#2]{#3}{#4}}%
}

\DeclareMathOperator*{\E}{\mathbb{E}}
\DeclareMathOperator*{\Var}{Var}

\let\Pr\PrAux

\DeclareMathOperator*{\ind}{\mathds{1}}

\renewcommand{\tilde}{\widetilde}

\newcommand{\defeq}{\coloneqq}
\newcommand{\eps}{\varepsilon}

\renewcommand{\emptyset}{\varnothing}
\renewcommand{\epsilon}{\eps}

\newcommand{\defn}[1]{\emph{\boldmath\textbf{#1}}}

\usepackage{regexpatch}
\makeatletter
\xpatchcmd\thmt@restatable{%
\csname #2\@xa\endcsname\ifx\@nx#1\@nx\else[{#1}]\fi
}{%
\ifthmt@thisistheone
\csname #2\@xa\endcsname\ifx\@nx#1\@nx\else[{#1}]\fi
\else
\csname #2\@xa\endcsname[{Restated}]
\fi}{}{}
\makeatother

\usepackage{physics}

\usepackage{todonotes}

\newcommand{\Poisson}{\operatorname{Poisson}}
\newcommand{\Cov}{\operatorname{Cov}}

%% file: introduction.tex
The linear probing hash table, first introduced in 1954 by a group of researchers at IBM \cite{knuth1998art}, is one of the oldest and most basic data structures in computer science. To insert a key $x$ into a table of size $n$, linear probing computes a random hash $h(x)$, examines the slots
\[
    h(x),\ h(x)+1,\ h(x)+2,\ h(x)+3,\ldots \pmod n,
\]
and stores $x$ in the first empty slot from the sequence. This results in a hash table that is exceptionally simple to implement, and that exhibits good data locality in practice due to the consecutive nature of its memory accesses.

The main drawback of linear probing is that it scales poorly when the table is filled to a high load factor. Consider a table with load factor $1-\eps$, so that only an $\eps$ fraction of its slots remain empty. If successive probes behaved like independent random locations, one would expect to find an empty slot after $O(\eps^{-1})$ probes. This led Peterson in 1957 to conjecture that linear probing incurs $O(\epsilon^{-1})$ expected time per insertion \cite{peterson1957addressing}. However, in a landmark 1962 result, Knuth showed that this is not the case \cite{knuth1962notes}.\footnote{Knuth's note was never formally published, and the first published discovery of the result was in a 1966 paper by Konheim and Weiss \cite{konheim1966occupancy}.} This is due to a \emph{clustering effect}: occupied slots tend to form long consecutive runs, and these runs drive the expected number of probes needed to find an available slot up to $\Theta(\eps^{-2})$ \cite{knuth1998art,knuth1962notes,bender2021linear,braverman2024tight}.

In 1968, Maurer \cite{maurer1968programming} proposed quadratic probing as a simple way to retain much of linear probing's data locality while escaping its clustering behavior. Quadratic probing replaces the consecutive offsets by square offsets, examining slots from the sequence
\[
    h(x),\ h(x)+1^2,\ h(x)+2^2,\ h(x)+3^2,\ldots \pmod n.
\]
It is widely believed \cite{maurer1968programming,hopgood1972quadratic,richter2015seven,cormen2009introduction,drozdek1995data,bauer2015hashing,degreef2017hash,gries2014hashing,kesden2007hashing,sullivan2021hashing,weiss2000data} that quadratic probing avoids the clustering effects of linear probing, achieving an expected insertion time of $O(\epsilon^{-1})$. However, despite a great deal of interest \cite{maurer1968programming,hopgood1972quadratic,radke1970use,ecker1974period,batagelj1975quadratic,richter2015seven,kuszmaul2024analysis,guo2026simple,hu2026quadratic,zamir2026locality}, the proof of this conjecture has remained elusive.

Not only is it open to prove a bound of $O(\epsilon^{-1})$, it is open to prove any bound of the form $f(\epsilon^{-1})$ for \emph{any function} $f$. Kuszmaul and Xi made the first significant progress towards this goal in 2024, proving constant expected insertion time at a small positive load factor of roughly $9\%$ \cite{kuszmaul2024analysis}. Very recently, Guo, Pettie, and Wan gave a stronger analysis, implying $O(1)$ insertions up to load $37.61\%$ \cite{guo2026simple}. The only known bound to extend to higher load factors is $O(\eps^{-2}\log n)$ due to Zamir \cite{zamir2026locality} and depends on both $\epsilon$ and $n$. 

Thus, we find ourselves in the following situation. From a practical perspective, it is widely agreed that quadratic probing \emph{does} eliminate the clustering effects of linear probing \cite{cormen2009introduction,drozdek1995data,degreef2017hash,bauer2015hashing,gries2014hashing,kesden2007hashing,sullivan2021hashing,weiss2000data,richter2015seven,abseil2018raw}. On the other hand, from a theoretical perspective, we know almost nothing about its performance. Even proving an $O(1)$ expected-time bound for a table that is 50\% full remains open. 

\paragraph{This paper: a nearly optimal bound for quadratic probing.}
In this paper, we provide a nearly tight analysis of quadratic probing, establishing an $\epsilon^{-(1 + o(1))}$ bound on expected insertion time. Note that deletions are not considered.

\begin{restatable}{theorem}{thmmain}
  \label{thm:main}
  Consider a quadratic-probing hash table of prime size $n$. For any parameter $\eps$ with $1>\eps>1/\sqrt{\log n}$, the $((1-\eps)n + 1)$-th insertion takes expected time $\eps^{-(1+o(1))}$.
\end{restatable}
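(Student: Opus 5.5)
We will prove the theorem by combining the section structure laid out above: Poissonization with positive association, a covariance-based tail bound, number-theoretic properties of the quadratic sequence, and a final accounting of expectations.

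\textbf{Step 1: Poissonization.} We first replace the fixed sequence of $(1-\eps)n$ insertions by a Poissonized arrival process. Each slot $h$ receives a $\Poisson(\lambda)$ number of keys hashing to it, with $\lambda$ slightly less than $1-\eps$, say $\lambda = 1-\eps/2$ adjusted so that the total is at least $(1-\eps)n$ with overwhelming probability.

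We then argue monotonicity: adding keys can only make the set of occupied slots larger. This holds because quadratic probing without deletions has the property that the final occupied set is independent of insertion order. Indeed, it is a greedy matching in a bipartite graph where each key prefers slots in its probe order, and adding keys only grows the occupied set. So it suffices to bound, in the Poissonized table, the probability that the first $k$ probes $h, h+1, h+4, \dots, h+(k-1)^2$ from a fresh random $h$ are all occupied.

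The key structural fact we would use from Section~\ref{sec:poissonization} is positive association: the indicators ``slot $s$ is empty'' are increasing functions of the independent Poisson counts in the reverse order. More precisely, emptiness is a decreasing function of the arrival counts, so the events are positively associated (FKG/Harris). This gives covariance control over the emptiness indicators.

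\textbf{Step 2: tail bound on the number of empty slots along a probe sequence.} Let $Z_k=\sum_{i<k}\ind[h+i^2 \text{ empty}]$. Its expectation is about $k\eps$, since each slot is empty with probability about $\eps$ by symmetry and a counting argument. We want $\Pr[Z_k=0]\le \exp(-(k\eps)^{1-o(1)})$, or at least summable, for $k\ge \eps^{-(1+o(1))}$.

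We would invoke the concentration bound of Section~\ref{sec:covariance-tail-bound}. I expect it has the following form: if the pairwise covariances of the indicators are small on average, then the variance is small and a lower-tail bound follows, possibly via a Janson-type or Chebyshev argument on sub-blocks.

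The required input is that $\Cov(\ind[s \text{ empty}],\ind[t \text{ empty}])$ decays with the ``probe-distance'' between $s$ and $t$. Correlations arise only when the probe sequences of keys interact. For linear probing, slots at distance $d$ are strongly correlated, but the quadratic offsets $i^2-j^2=(i-j)(i+j)$ spread out. This is where the number-theoretic properties of Section~\ref{sec:tail-bounds} enter: modulo a prime $n$, the number of representations of a difference $d$ as $i^2-j^2$ with $i,j$ bounded is small (divisor-type bounds), giving $n^{o(1)}$- or $k^{o(1)}$-type losses. This is the source of the $o(1)$ in the exponent.

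\textbf{Step 3: summing.} The expected insertion time is $\sum_k \Pr[\text{first } k \text{ probes all occupied}]$. We split this sum at $K=\eps^{-(1+\delta)}$. The terms with $k\le K$ contribute at most $K$. For $k>K$ we use the tail bound from Step 2, with decay strong enough (e.g. $\Pr\le (k\eps)^{-2}$ beyond $K$) to make the remainder $O(K)$.

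Two boundary issues remain. First, the condition $\eps>1/\sqrt{\log n}$ is used to ensure $K\ll \sqrt n$, so that the probe offsets $i^2$ for $i<K$ are distinct modulo $n$ and the number-theoretic estimates hold. Second, for very large $k$ we use that quadratic probing visits $(n+1)/2$ distinct slots, with a crude bound on the tail.

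\textbf{Main obstacle.} The hardest step is Step 2: bounding the covariances of emptiness indicators for slots in a single quadratic probe sequence. These correlations are long-range through chains of displaced keys, and I would first try to bound them by the probability of a witness structure, namely a chain of full ``clusters'' connecting $s$ to $t$. I would then show that such witnesses are rare using the additive-combinatorial sparsity of square differences.
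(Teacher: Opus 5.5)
Your outline names the right ingredients (Poissonization, positive association, divisor-type bounds on square differences), but the two steps that actually produce $\eps^{-(1+o(1))}$ are missing.

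\textbf{Covariance control and concentration.} In Step 2 you plan to bound individual covariances $\Cov(F_s,F_t)$ by witness chains of full clusters, and you leave this as the ``main obstacle.'' Arguments of that kind have only worked at small load factors, and nothing in your plan makes them work at load $1-\eps$. The paper never bounds a single covariance. In the Poissonized table the number of empty slots is $Z=(n-N)_+$, so $n\sum_r C(r)=\Var(Z)\le\Var(N)\le n$, i.e.\ $\sum_r C(r)\le 1$. Positive association makes every $C(r)\ge 0$. Grouping the pairs $(a,b)$ by the residue $b^2-a^2$ then gives $\sum_{0\le a<b<k}C(b^2-a^2)\le R(k)$, where $R(k)$ is the largest multiplicity of a nonzero residue. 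This global-variance-plus-nonnegativity step is how the divisor bound enters; a multiplicity bound alone gives nothing.

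Your concentration step would also fail even with good covariances. Chebyshev, on the whole sum or on sub-blocks, pays for the diagonal term $\sum_a\Var(F_{a^2})\approx\eps k$. That gives only $\Pr[T>k]\lesssim 1/(\eps k)$, whose sum over $k$ grows with $n$. Janson's inequality needs a product structure these indicators lack. You need a bound involving only the off-diagonal covariance sum $d$. The paper compares $S$ with a sum of independent copies via the convex order for positively associated variables, giving $\Pr[S=0]\le e^{-\mu/8}+8d/\mu^2$ and hence $\Pr[T>k]=O(R(k)/(\eps^2k^2))$.

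\textbf{The range $k>\sqrt n$.} You dispatch this with an unspecified ``crude bound,'' and that cannot work. At $k=\sqrt n$ the divisor bound gives $R(k)\le\max_{i<n}\tau(i)=\exp(\Theta(\log n/\log\log n))$, which no function of $\eps$ controls. For larger $k$, pigeonhole forces $R(k)\ge\binom{k}{2}/(n-1)$. So the covariance method can never beat $\Pr[T>k]=O(1/(\eps^2 n))$, and summing that over $k\le 2n$ gives $\Theta(\eps^{-2})$. Knowing that the sequence hits $(n+1)/2$ distinct slots gives no quantitative tail.

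The paper needs two further number-theoretic inputs:
\begin{itemize}
\item A modular-hyperbola estimate giving $R(k)=O(k^2/n)$ at $k=\lceil n^{0.76}\rceil$. This removes the $n^{o(1)}$ loss on $n^{0.76}\le k<\eps n$.
\item For $k\ge\eps n$, a worst-case Covering Lemma, proved by inclusion--exclusion with Mak--Zaharescu estimates for $|\bigcap_{j\in J'}(Q+j)|$. For \emph{every} occupied set of size $(1-\eps)n$, fewer than $\lceil 2(n/k)\ln\eps^{-1}\rceil$ hashes have all first $k$ probes occupied, so $\Pr[T>k]\le 3\ln\eps^{-1}/k$.
\end{itemize}
The Covering Lemma is where $\eps>1/\sqrt{\log n}$ is really used: it keeps $|J|=o(\log n/\log\log n)$ and the error $n^{1/2+o(1)}$ below $\eps n$. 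Distinctness of the offsets $i^2$ holds for all $i<n/2$ regardless of $\eps$, so that is not where the hypothesis matters.

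\textbf{A smaller error in Step 1.} The occupied set is \emph{not} independent of insertion order. For prime $n\ge5$ with slot $0$ occupied, inserting a hash-$0$ key then a hash-$1$ key gives $\{0,1,2\}$, while the reverse order gives $\{0,1,4\}$. What is true is that inserting extra keys into a fixed sequence only enlarges the occupied set. To get positive association you must realize random-order insertion as a monotone function of independent inputs. The paper does this with $q$ layers of independent Poisson counts and a limit $q\to\infty$.
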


Thus, for any fixed constant $\epsilon \in (0, 1)$, we may conclude that quadratic probing supports $O(1)$ expected-time insertions. And, with $\epsilon^{-1}$ as an asymptotic parameter (satisfying $\eps \ge 1 / \sqrt{\log n}$), we may conclude that quadratic probing does indeed avoid the clustering effects of linear probing, at least up to a subpolynomial factor in $\epsilon^{-1}$.

The proof also offers a new perspective on why quadratic probing admits such a strong bound. 
Unlike the approaches of \cite{kuszmaul2024analysis,guo2026simple}, which extend to arbitrary ``fixed-offset probing'' hash tables\footnote{\cite{guo2026simple}, in their Section 4, tries to use the property of the quadratic probing sequence to improve their general bound, but achieves only a limited improvement. }, our analysis uses structure specific to quadratic probing hash tables (and must do so in order for a bound of $o(\epsilon^{-2})$ to be possible). Intuitively, the feature of quadratic probing that ends up being most important is the following: For any length-$k$ prefix $P_k$ of the quadratic probing sequence, and for any translation $t$, the intersection $P_k \cap (P_k + t)$ is guaranteed to be small (size at most $k^{o(1)}$, if $k \le O(\sqrt{n})$). Although this is not the only number-theoretic property of quadratic probing that we use in our analysis, it is (morally) the property that lets us provably avoid linear-probing-style clustering effects. 

Note that any analysis of quadratic probing must reason not just about the current insertion, but about the state of the hash table. We cannot afford, for example, to be the ``cartoon state'' where the $(1 - \epsilon)n$ elements all appear consecutively, and the $\epsilon n$ free slots also appear consecutively. 

A key observation in our analysis is that, although we do need to reason about the hash-table state, we can make use of a surprisingly weak characterization. In fact, although we require the final insertion we are analyzing to use quadratic probing, the only property we require of previous insertions is that they employed \emph{some} fixed-offset probing scheme. Moreover, although the resulting table structure could be quite sophisticated, our analysis ends up only making use of two facts about it: that, if we place $\Poisson((1 - \epsilon)n)$ keys into the table, and if we define $F_1, \ldots, F_m \in \{0, 1\}$ to be the indicators for which slots are free, then
\begin{enumerate}
    \item $(F_1, \ldots, F_m)$ has a translation-invariant distribution (by rotational symmetry), and
    \item the random variables $F_1, \ldots, F_m$ are positively associated. 
\end{enumerate}

At first glance, the positive association between free slots may seem like \emph{bad} news. We would greatly prefer if, instead, the free slots were negatively associated. This would mean that, if we probe a sequence of slots that are all occupied, it makes it more likely for subsequent probes to be free. And, indeed, if the slots were negatively associated, we could complete the entire analysis almost trivially using a Chernoff bound for negatively associated random variables. 

The second key observation in our analysis is that, although the association is not in the direction one might initially hope, having some structure is still far better than having no structure at all. In fact, positive association plays two central roles in our analysis. 

First, positive association guarantees that all pairwise covariances $\Cov(F_i, F_j)$ are nonnegative. This allows us to perform one of the key steps in our proof, which translates a bound of the form $\sum_{d \in [n]} \Cov(F_0, F_d) \le O(1)$ into a bound of the form $\sum_{d \in S} \Cov(F_0, F_d) \le O(1)$ for a specific set $S \subseteq [n]$ of our choosing. Such a translation is not possible without positive association, because the covariances that appear in the first sum but not the second could be highly negative. 

Second, although sums of positively associated random variables need not concentrate as sharply as sums of independent variables, one can prove concentration inequalities tailored to positive association, in which the loss is controlled by the sum of the pairwise covariances. In our setting, this is substantially stronger than applying Chebyshev's inequality directly to an arbitrary collection of dependent random variables. 

With these ideas in mind, the full analysis proceeds as follows. Section \ref{sec:poissonization} establishes the translational-invariance and positive-association properties that the analysis relies on; Section \ref{sec:covariances} then uses these properties to derive bound on the covariances along the quadratic-probing offset sequence; Section \ref{sec:covariance-tail-bound} derives a custom concentration inequality for bounding the probability that a sum of positively associated $\{0, 1\}$ random variables evaluates to 0; Section \ref{sec:tail-bounds} then derives the number-theoretic properties specific to the quadratic probing sequence that we need in our analysis; and finally, Section \ref{sec:expected-time} puts the pieces together in order to prove Theorem \ref{thm:main}.

\paragraph{Related work.} 
For most of the history of quadratic probing, the only known theoretical results \cite{maurer1968programming,hopgood1972quadratic,radke1970use,ecker1974period,batagelj1975quadratic} concerned the question of how many numbers in $\mathbb{Z}_n$ are hit by the sequence $(i^2)_{i = 0}^\infty$ (or, in some cases, related sequences such as $(\sum_{j = 0}^i j)_{i = 0}^\infty$). For prime $n$, the quadratic sequence ends up hitting only roughly half of $\mathbb{Z}_n$, so to formally ensure that every insertion succeeds, one must adapt the sequence so that the first $O(n)$ probes hit every slot.\footnote{In our paper, we assume that this is done in the most naive way possible where, after $n$ probes, the next $n$ probes simply scan through the table.} These early works \cite{maurer1968programming,hopgood1972quadratic,radke1970use,ecker1974period,batagelj1975quadratic} were not able to prove anything about expected insertion time. 

The first paper to prove a bound on expected insertion time was a 2024 paper by Kuszmaul and Xi, which used a witness-tree argument to obtain a constant time bound for the first $\approx 0.089 n$ insertions \cite{kuszmaul2024analysis}; this bound also applies to any fixed-offset open-addressing scheme. Subsequent work by Guo, Pettie, and Wan introduced a more sophisticated witness-forest analysis that applies to load factors up to $35.74\%$ for any fixed-offset scheme, and to load factors up to $37.61\%$ \cite{guo2026simple} for quadratic probing, specifically. Whether or not these types of witness tree/forest arguments can be extended to higher load factors remains an interesting open question. Prior to our work, the only known bound to extend to higher load factors was a result of Zamir \cite{zamir2026locality} which achieves a bound of $O(\epsilon^{-2} \log n)$ for any fixed-offset scheme.

Several recent works have considered ``quadratic-probing-like'' hash tables in different elements use different (randomized) offset sequences in order to simplify the analysis \cite{zamir2026locality, hu2026quadratic}. Hu et al.~\cite{hu2026quadratic} also extend their analysis to a version of quadratic probing in which every element uses the same fixed-offset probe sequence, but where that sequence is random across sequences with quadratic asymptotic growth; they show that, if the elements are kept in ``anti-Robin-Hood order'' (this requires a different insertion strategy than the standard one), then the resulting hash table is likely to support $O(\epsilon^{-1})$ amortized expected time insertions. 

Despite the lack of theoretical analysis, variations of quadratic probing have continued to be widely used in practice, including in widely used implementations by Google \cite{abseil2017abseil,google_dense_hash_map}, as well as in the standard libraries for both the Rust \cite{rust_hashmap} and Go \cite{go_builtin_map} programming languages.

The more-than 50-year quest to analyze quadratic probing comes in stark contrast to the analyses of other hash-table designs from the same era. For example, a close contemporary of quadratic probing is \emph{double hashing} \cite{balbine1968computational,bell1970linear}, which was also introduced in 1968, and which also continues to be an influential design today \cite{bronson2019open} (although perhaps less so than quadratic probing). Like quadratic probing, double hashing initially proved resistent to analysis. However, in a line of work spanning the 1970s and 80s \cite{guibas1976analysis,guibas1978analysis,lueker1988more,lueker1993more}, researchers were able to develop coupling techniques for relating the behavior of double hashing to that of the more straightforward random probing, and thereby obtained a tight analysis. Meanwhile, the task of obtaining almost any nontrivial analysis of quadratic probing continued to remain open.

%% file: prelim.tex
For a positive integer $m$, write $[m]=\{0,1,\ldots,m-1\}$. Slot indices are always taken modulo the table size. 
A \defn{quadratic probing hash table} stores a set $S \subseteq [U]$ of keys in an array of $n$ slots and supports insertions and membership queries. 

Throughout the paper, $h:[U]\to[n]$ denotes the hash function, which is assumed to map keys to independent uniformly random slots in $[n]$. When inserting a key $x \in [U]$, the algorithm examines slots
\begin{equation*}
    h(x),\ h(x)+1^2,\ h(x)+2^2,\ h(x)+3^2,\ldots 
\end{equation*}
in this order and stores $x$ in the first empty slot.\footnote{If the first $n$ probes find no empty slot, the algorithm examines the remaining unprobed offsets in an arbitrary fixed order; thus every insertion terminates within $2n$ probes so long as the hash table is not full.} We refer to this sequence of slots as the \defn{probe sequence} of $x$. A membership query examines the same sequence until it either finds $x$ or until it reaches an empty slot (meaning $x$ is not in the table).

For convenience, as in previous works, we restrict our attention to insertion-only workloads (no deletions). We remark, however, that one can also extend quadratic probing to support deletions, with up to $(1 - \epsilon)n$ elements present at a time, by simply marking elements as deleted (see, e.g., \cite{abseil2017abseil}), and rebuilding the hash table from scratch once every, say, $\epsilon n / 2$ operations. Such an approach naturally inherits the same $\epsilon^{-(1 + o(1))}$ time bound as the insertion-only case (plus the cost of performing the rebuilds, which amortizes to $\epsilon^{-(1 + o(1))}$ per operation). 

\paragraph{Distinct probes in a table of prime size.} Notice that, if $i^2 \equiv j^2 \pmod n$ then the $i$-th and $j$-th probes in the probe sequence are actually the same. The next observation says that, if $n$ is prime, then all of the first $n / 2$ probes are guaranteed to be distinct. 

\begin{observation}[Distinct probes]
  \label{obs:distinct-probes}
  Let $n$ be prime and let $1\le k\le n/2$. Then the offsets $0^2,1^2,\ldots,(k-1)^2$ are distinct modulo $n$. Equivalently, $b^2-a^2\not\equiv0\pmod n$ whenever $0\le a<b<k$.
\end{observation}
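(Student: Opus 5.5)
The plan is to argue directly from the factorization of a difference of squares together with the fact that $\mathbb{Z}_n$ is a field when $n$ is prime. Fix $0\le a<b<k$ and suppose for contradiction that $b^2\equiv a^2\pmod n$. Then $n$ divides $b^2-a^2=(b-a)(b+a)$. Since $n$ is prime, Euclid's lemma gives that $n$ divides $b-a$ or $n$ divides $b+a$. The remaining step is to rule out both cases using the size constraints on $a$ and $b$.

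For the first factor, we have $1\le b-a\le b<k\le n/2<n$. So $b-a$ is a positive integer smaller than $n$, and it cannot be divisible by $n$.

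For the second factor, $a+b\ge b\ge 1$, so $a+b$ is positive. Also $a+b\le (k-2)+(k-1)=2k-3\le n-3<n$, using $k\le n/2$. Hence $0<a+b<n$, and $a+b$ is not divisible by $n$ either. This contradicts the previous paragraph, so the offsets $0^2,1^2,\ldots,(k-1)^2$ are pairwise distinct modulo $n$.

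The stated equivalence is immediate: two offsets coincide modulo $n$ exactly when $b^2-a^2\equiv 0\pmod n$ for some $0\le a<b<k$. There is no real obstacle here. The only point needing care is the bound on $a+b$, which is where the hypothesis $k\le n/2$ is used; it also excludes the boundary case $a+b=n$, which is where symmetric pairs $i^2\equiv(n-i)^2$ first arise. I note that $n=2$ and $k=1$ is vacuous, so no special treatment of small primes is required.
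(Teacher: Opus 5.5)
Your proof is correct and is essentially the paper's argument: factor $b^2-a^2=(b-a)(b+a)$, use primality of $n$ to force $n$ to divide one of the factors, and rule out both by the size bounds $0<b-a<n$ and $0<a+b<2k\le n$. Your bound $a+b\le 2k-3$ is slightly sharper than the paper's $a+b<2k$, but either one is enough.
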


\begin{proof}
If $a^2\equiv b^2\pmod n$, then $n$ divides $(b-a)(b+a)$, so, $n$ being prime, $n$ divides $b-a$ or $b+a$. Neither is possible, since $0<b-a<k\le n/2$ and $0<b+a<2k\le n$.
\end{proof}

\paragraph{Harris's inequality and positively associated random variables.}
A finite family $X_1,\ldots,X_m$ of random variables is \defn{positively associated} if
\[
    \Cov\bk*{f(X_1,\ldots,X_m),g(X_1,\ldots,X_m)}\ge0
\]
for every pair of coordinatewise nondecreasing functions $f$ and $g$ for which the covariance is defined. 

The following classic result, due to Harris, says that independent random variables are positively associated.
\begin{theorem}[Harris's inequality \cite{harris1960lower}]
  \label{thm:harris}
  Let $Y_1,\ldots,Y_s$ be independent random variables. If $f$ and $g$ are both coordinatewise nondecreasing, then
  \[
      \Cov\bk*{f(Y_1,\ldots,Y_s),g(Y_1,\ldots,Y_s)}\ge0
  \]
  whenever the covariance is defined. The same conclusion holds when $f$ and $g$ are both coordinatewise nonincreasing. In other words, independent random variables are positively associated.
\end{theorem}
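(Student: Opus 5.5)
We will prove Harris's inequality by induction on the number $s$ of independent variables. The nonincreasing case follows from the nondecreasing one by replacing $f,g$ with $-f,-g$, since $\Cov(-f,-g)=\Cov(f,g)$. So throughout we assume $f$ and $g$ are coordinatewise nondecreasing, with enough integrability for all the covariances and conditional expectations below to be defined. (In our application the variables and functions are bounded, so this is harmless.)

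\textbf{Base case $s=1$.} Let $Y'$ be an independent copy of $Y=Y_1$. Since $f$ and $g$ are both nondecreasing on the real line, $f(Y)-f(Y')$ and $g(Y)-g(Y')$ always have the same sign. Hence
\[
0\le \E\bigl[(f(Y)-f(Y'))(g(Y)-g(Y'))\bigr]=2\E[fg]-2\E[f]\E[g]=2\Cov(f(Y),g(Y)).
\]

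\textbf{Inductive step.} Assume the claim for $s-1$ variables. Write $f=f(Y_1,\ldots,Y_s)$ and $g=g(Y_1,\ldots,Y_s)$, and condition on $Y_s$. By the law of total covariance,
\[
\Cov(f,g)=\E\bigl[\Cov(f,g\mid Y_s)\bigr]+\Cov\bigl(\E[f\mid Y_s],\E[g\mid Y_s]\bigr).
\]
\begin{itemize}
\item \emph{First term.} By independence, conditioning on $Y_s=y$ leaves $Y_1,\ldots,Y_{s-1}$ with their original joint law, still independent. The functions $f(\cdot,y)$ and $g(\cdot,y)$ are nondecreasing in the first $s-1$ coordinates, so the inductive hypothesis gives $\Cov(f,g\mid Y_s=y)\ge0$ for every $y$. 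Hence the first term is nonnegative.
\item \emph{Second term.} Again by independence, $\E[f\mid Y_s=y]=\E[f(Y_1,\ldots,Y_{s-1},y)]$. This is an expectation over a fixed distribution of a function nondecreasing in $y$, so it is nondecreasing in $y$; the same holds for $g$. The base case then shows the second term is nonnegative.
\end{itemize}
Adding the two terms gives $\Cov(f,g)\ge0$.

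\textbf{Main obstacle.} The only delicate point is justifying the conditioning rigorously for general, not necessarily discrete, random variables. We would handle it by writing conditional expectations as integrals against the product measure, which Fubini permits, rather than using conditional probabilities on null events. For finite-valued variables, which is all our application needs, this is immediate.
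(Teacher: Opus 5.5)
The paper does not prove this statement. It quotes it as a classical result, citing Harris, and uses it only as a black box in the proof of \cref{cor:monotone-association}, so there is no paper argument to compare against. Your proof is the standard induction and it is correct. The one-variable case rests on the independent copy $Y'$ and the pointwise inequality $(f(Y)-f(Y'))(g(Y)-g(Y'))\ge0$. The inductive step uses the law of total covariance: the inductive hypothesis handles the sections $f(\cdot,y),g(\cdot,y)$, whose arguments keep their independent joint law after conditioning on $Y_s$. The base case then handles the nondecreasing functions $y\mapsto\E[f(Y_1,\ldots,Y_{s-1},y)]$ and $y\mapsto\E[g(Y_1,\ldots,Y_{s-1},y)]$.

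Three small remarks.

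(i) Your base case uses that each $Y_i$ is real-valued, or at least takes values in a totally ordered set. This is implicit in the statement's use of ``coordinatewise nondecreasing.''

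(ii) Your blanket integrability assumption is slightly stronger than the statement's ``whenever the covariance is defined.'' It is easy to remove: prove the bounded case, then truncate with $f_M=\max(\min(f,M),-M)$. This function is still coordinatewise nondecreasing and satisfies $|f_Mg_M|\le|fg|$, so dominated convergence gives $\Cov(f_M,g_M)\to\Cov(f,g)$.

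(iii) Your side remark that the application only needs finite-valued (or bounded) variables is inaccurate. \cref{lemma:vacancy-mean-and-association} applies the corollary to $Y_{t,i}\sim\Poisson((1-\eps)/q)$, which are unbounded and countably-valued. Nothing breaks, because the composed functions there are bounded (they depend only on the vacancy vector in $\{0,1\}^n$) and the variables are discrete. Conditioning on atoms $\{Y_s=y\}$ of positive probability therefore remains elementary, and all sums converge absolutely. The justification should say ``discrete with bounded $f,g$'' rather than ``finite-valued.''
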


As an immediate corollary, the same result holds if the variables $Y_1,\ldots,Y_s$ are not themselves independent, but are instead all monotone functions of the same independent inputs.

\begin{corollary}[Monotone functions of independent variables are positively associated]
  \label{cor:monotone-association}
  Let $Y_1,\ldots,Y_s$ be independent random variables, and let $f_1,\ldots,f_m$ be real-valued functions that are either all coordinatewise nondecreasing or all coordinatewise nonincreasing. Then the variables
  \[
      X_i\defeq f_i(Y_1,\ldots,Y_s),\qquad i=1,\ldots,m,
  \]
  are positively associated.
\end{corollary}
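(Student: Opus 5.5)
The plan is to deduce the corollary directly from Harris's inequality (\cref{thm:harris}) by composing monotone functions. Fix two coordinatewise nondecreasing functions $f,g:\mathbb{R}^m\to\mathbb{R}$ for which $\Cov\bk*{f(X_1,\ldots,X_m),g(X_1,\ldots,X_m)}$ is defined. We must show this covariance is nonnegative. Define
\[
F(y_1,\ldots,y_s)\defeq f\bk*{f_1(y),\ldots,f_m(y)},\qquad G(y_1,\ldots,y_s)\defeq g\bk*{f_1(y),\ldots,f_m(y)},
\]
where $y=(y_1,\ldots,y_s)$. By construction, $F(Y_1,\ldots,Y_s)=f(X_1,\ldots,X_m)$ and $G(Y_1,\ldots,Y_s)=g(X_1,\ldots,X_m)$. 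Hence the covariance in question equals $\Cov(F(Y),G(Y))$, and in particular that covariance is defined.

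The key step is to check the monotonicity of $F$ and $G$ in each coordinate $y_j$. First suppose every $f_i$ is coordinatewise nondecreasing. If we increase $y_j$ while holding the other coordinates fixed, every $f_i(y)$ weakly increases. Since $f$ is nondecreasing in each argument, changing the arguments one at a time shows that $f$ evaluated at the new vector is at least $f$ at the old vector. So $F$ is nondecreasing in $y_j$, and the same argument applies to $G$. \cref{thm:harris} then gives $\Cov(F(Y),G(Y))\ge 0$.

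Now suppose every $f_i$ is coordinatewise nonincreasing. Increasing $y_j$ weakly decreases every $f_i(y)$, so $F$ and $G$ are both coordinatewise nonincreasing. The second clause of \cref{thm:harris} covers exactly this case, so again $\Cov(F(Y),G(Y))\ge 0$.

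No step here is a real obstacle. The only points needing care are the two already handled: that the composition of a nondecreasing $f$ with a vector of coordinatewise monotone maps, all monotone in the same direction, is monotone in each $y_j$ (which follows by changing one argument at a time), and that the hypothesis ``covariance defined'' transfers, which holds because the two random variables coincide.
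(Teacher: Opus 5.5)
Your proof is correct and follows the paper's argument essentially verbatim: compose the test functions $f,g$ with $(f_1,\ldots,f_m)$, check that the compositions are coordinatewise monotone in $y$, and invoke \cref{thm:harris}. The only difference is that you work out the nonincreasing case explicitly via the second clause of \cref{thm:harris}, whereas the paper just says it follows symmetrically.
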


\begin{proof}
We consider only the case where the $f_i$s are all coordinatewise nondecreasing; the nonincreasing case follows symmetrically. Let $F$ and $G$ be coordinatewise nondecreasing, and write $\tilde F(y)\defeq F\bk*{f_1(y),\ldots,f_m(y)}$ and $\tilde G(y)\defeq G\bk*{f_1(y),\ldots,f_m(y)}$, so that $F(X_1,\ldots,X_m)=\tilde F(Y_1,\ldots,Y_s)$ and $G(X_1,\ldots,X_m)=\tilde G(Y_1,\ldots,Y_s)$. Increasing a coordinate $y_j$ can only increase each $f_i(y)$, and can therefore only increase $\tilde F(y)$ and $\tilde G(y)$. Thus $\tilde F$ and $\tilde G$ are coordinatewise nondecreasing, and \cref{thm:harris} gives $\Cov(\tilde F,\tilde G)\ge0$.
\end{proof} 

\paragraph{Monotonicity of the occupied set.}
Finally, when applying Harris's inequality, we will make use of the following basic lemma. The lemma says that, when an additional key is inserted before a common suffix of insertions, although the additional key may change the locations chosen by later keys, it cannot make the final occupied set smaller.

\begin{lemma}[Occupied-set monotonicity]
  \label{lem:occupied-set-monotonicity}
  Let $A\subseteq B$ be two occupied sets, and fix a sequence of future keys together with their probe sequences. If this same sequence is inserted starting from $A$ and from $B$, then after every insertion the occupied set in the first table is contained in the occupied set in the second table.
\end{lemma}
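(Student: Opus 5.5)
We argue by induction on the number of keys inserted so far. Write $A_t$ and $B_t$ for the occupied sets of the two tables after the first $t$ keys of the fixed sequence have been inserted, with $A_0=A$ and $B_0=B$. We maintain the invariant $A_t\subseteq B_t$. For $t=0$ this is exactly the hypothesis $A\subseteq B$.

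For the inductive step, suppose $A_t\subseteq B_t$, and let the $(t+1)$-st key $x$ have fixed probe sequence $s_0,s_1,s_2,\ldots$. This is the same sequence in both tables, since the probe sequences are part of the fixed input. Let $a$ be the first slot of this sequence that is not in $A_t$, and $b$ the first slot not in $B_t$. Then $A_{t+1}=A_t\cup\{a\}$ and $B_{t+1}=B_t\cup\{b\}$. If a table is already full, we can treat the insertion as a no-op or note that it is excluded by assumption; this edge case needs only a remark.

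It suffices to show $a\in B_{t+1}$. There are two cases.
\begin{itemize}
\item If $a\in B_t$, then $a\in B_{t+1}$ immediately.
\item If $a\notin B_t$, then $a$ is a slot of the probe sequence outside $B_t$. Every slot probed before $a$ lies in $A_t\subseteq B_t$, so it is occupied in the second table as well. Hence $a$ is the first slot of the sequence outside $B_t$, which gives $b=a$, and again $a\in B_{t+1}$.
\end{itemize}
In either case $A_{t+1}=A_t\cup\{a\}\subseteq B_{t+1}$, which closes the induction.

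The only delicate point is the second case. Its key fact is that the probe sequence is scanned in the same order in both tables, and every probe before $a$ hits a slot of $A_t\subseteq B_t$. No other obstacle is expected. The argument uses only that each key probes a fixed sequence and takes the first free slot in it, so it applies to any fixed-offset probing scheme, as the paper notes later.
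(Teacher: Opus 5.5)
Your proposal is correct and matches the paper's proof: both induct one insertion at a time and split on whether the first slot $a$ of the probe sequence outside $A_t$ lies in $B_t$. When it does not, every earlier probe lies in $A_t\subseteq B_t$, so both tables choose $a$. Your remark on the full-table edge case is a harmless extra: if $A_t$ is full then $A_t=B_t=[n]$, and if only $B_t$ is full then $a\in B_t$ trivially.
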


\begin{proof}
It suffices to consider one insertion. Let $s$ be the first slot in its probe sequence that is empty relative to $A$. If $s\in B$, then the insertion into $A$ produces $A\cup\{s\}\subseteq B$, which is contained in the occupied set produced from $B$. If $s\notin B$, then every probe before $s$ lies in $A\subseteq B$, while $s$ is empty in both tables. Both insertions therefore choose $s$, so the inclusion is again preserved. Applying this one-step argument successively proves the lemma.
\end{proof}

%% file: poissonization.tex
In this section, we establish the key positive association property that we need in our analysis. Instead of working on a table with a fixed number of stored keys, we Poissonize the number of stored keys. We then show that, within this Poissonized table, the indicator random variables for which slots are occupied are guaranteed to be positively associated random variables.

\paragraph{Poissonizing the number of keys.}
Rather than considering a table with $m = (1 - \eps)n$ keys, and analyzing the cost of the next insertion, we will spend most of the paper considering a hash table with $N \sim \Poisson(m)$ keys, and analyzing the next insertion into that table. We refer to the former as the \defn{fixed-load model} and the latter as the \defn{Poissonized model}. We will also refer to the tables after $n$ and $N$ insertions as the \defn{fixed-load table} and the \defn{Poissonized table}, respectively.

One subtle point is that, in the Poissonized model, we need to consider the possibility that more than $n$ total insertions occur (i.e., the table is full). In that case, we consider subsequent insertions to be no-ops (they do not change the state of the table), and define their insertion time to be $2n$. 

The next lemma says that, to analyze the fixed-load model, it suffices to analyze the Poissonized model.
\begin{lemma}[Poissonization reduction]
  \label{lem:reduction-to-poissonized}
  Let $T_i$ denote the time to perform the $(i + 1)$-th insertion (where $T_i = 2n$ for $i \ge n$). Then, for any positive integer $m$, and for $N \sim \Poisson(m)$, we have
  \begin{align}
      \Pr[T_{m} >k] \le 2\Pr[T_{N} > k].
      \label{eq:poisson-comparison}
  \end{align}
\end{lemma}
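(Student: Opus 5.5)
The plan is to prove that $\Pr[T_i > k]$ is nondecreasing in $i$, and then use that the median of a Poisson distribution is at least roughly its mean.

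\textbf{Step 1: monotonicity.} I will show $\Pr[T_i > k] \le \Pr[T_{i+1} > k]$ for every $i$. Couple the two experiments as follows.
\begin{itemize}
\item Insert keys $x_1,\dots,x_i$, producing occupied set $A$.
\item Insert keys $x_1,\dots,x_{i+1}$, producing occupied set $B$.
\item Since the insertions are identical up to key $i$, we have $A \subseteq B$. This also follows from \cref{lem:occupied-set-monotonicity} with the empty suffix.
\item Now let the next key $y$ have hash $h(y)$, independent of everything else. The time of inserting $y$ into $A$ is the index of the first probe not in $A$, which is at most the index of the first probe not in $B$.
\item The distribution of ``state after $i$ keys, then insert a fresh uniformly hashed key'' is exactly that of $T_i$. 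Likewise, the state after $i+1$ keys followed by $y$ gives $T_{i+1}$.
\end{itemize}
So $T_i \le T_{i+1}$ pointwise under the coupling. The edge cases need care. If $i \ge n$, then $T_i = 2n$ by convention and stays $2n$. If $B$ is full, then $T_{i+1}=2n$, which dominates any real insertion time, since real insertions take at most $2n$ probes. Hence $\Pr[T_i>k]$ is nondecreasing in $i$.

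\textbf{Step 2: Poisson tail.} Condition on $N$ and use monotonicity:
\[
\Pr[T_N>k] \;\ge\; \sum_{j\ge m}\Pr[N=j]\Pr[T_j>k] \;\ge\; \Pr[N\ge m]\,\Pr[T_m>k].
\]
It remains to show $\Pr[N \ge m] \ge 1/2$ for $N\sim\Poisson(m)$ with $m$ a positive integer. This is the known fact that the median of $\Poisson(m)$ is $m$ for integer $m$. More precisely, $\Pr[N \le m-1] \le 1/2$, a classical result of Teicher, or of Choi on Poisson medians.

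\textbf{Expected obstacle.} The only nontrivial step is this Poisson median bound. I would cite it. For a self-contained argument, I would use the identity $\Pr[N\le m-1]=\Pr[\Gamma(m,1)>m]$, which relates the Poisson CDF to the Gamma tail. Then I would use that the Gamma$(m,1)$ distribution has median less than its mean $m$, which holds by the known inequality $m-1/3 < \text{median} < m$. Alternatively, I could just cite the well-known bound $\Pr[\Poisson(m)\ge m]\ge 1/2$ directly.
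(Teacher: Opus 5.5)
Your proposal is correct and matches the paper's proof. The paper also fixes one key sequence $x_1,x_2,\ldots$ and a fresh key $y$ to get $\overline{T}_1\le\overline{T}_2\le\cdots$ (your Step 1), then uses the independence of $N$ and $\Pr[N\ge m]\ge 1/2$, which it simply cites from Mitzenmacher--Upfal; the only cosmetic difference is that you state monotonicity in distribution and sum over $j\ge m$, while the paper intersects with the event $\{N\ge m\}$ on a single coupled space.
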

\begin{proof}
Let $x_1, x_2, \ldots$ be distinct keys, and let $A_r$ denote the occupied set after inserting $x_1, \ldots, x_r$. Let $y \not\in \{x_i\}$ be an additional key, and let $\overline{T}_r$ be the time to insert $y$ into the table formed from $x_1, \ldots, x_r$. Then, by construction, $A_1 \subseteq A_2 \subseteq \cdots$, and $\overline{T}_1 \leq \overline{T}_2 \leq \cdots$. This means that, on the event $N \ge m$, we have $\overline{T}_m \le \overline{T}_N$. Note also $\overline{T}_m$ and $\overline{T}_N$ have the same distributions as $T_m$ and $T_N$, respectively.

Since $m$ is a positive integer and $N$ is Poisson with mean $m$, we have $\Pr[N\ge m]\ge1/2$ \cite[p.~130]{mitzenmacher2005probability}. We therefore have
\begin{align*}
    \Pr[\overline{T}_N >k]
    &\ge \Pr[N\ge m,\ \overline{T}_m>k]\\
    &=\Pr[N\ge m]\Pr[\overline{T}_m>k]\\
    &\ge\frac12 \Pr[\overline{T}_m > k]. \qedhere
\end{align*}
\end{proof}

\paragraph{Positive association in the Poissonized table.}
Let $N \sim \Poisson((1-\eps)n)$ be the number of keys inserted into the Poissonized table. For each $i\in[n]$, let $F_i\in\{0,1\}$ be the indicator random variable for whether slot $i$ is empty in the Poissonized table. We now argue that the $F_i$s are positively associated.

\begin{lemma}
  \label{lemma:vacancy-mean-and-association}
  For each $i\in[n]$, we have $\E\Bk*{F_i}\ge\eps$. Moreover, the variables $(F_i)_{i\in[n]}$ are positively associated.
\end{lemma}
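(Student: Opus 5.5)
The two claims are handled separately: a counting argument for the mean, and Harris's inequality for the association.

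\emph{The mean bound.} At most $\min(N,n)$ slots are occupied, so $\sum_{i\in[n]} F_i \ge n - N$. The hash values are uniform and independent, and every key uses the same offset sequence. Rotating all hash values by a fixed $t$ therefore rotates the whole table by $t$. Hence $(F_i)_{i\in[n]}$ is invariant in distribution under cyclic shifts, and every $\E[F_i]$ equals a common value $p$. This gives
\[
  np = \E\Bigl[\sum_i F_i\Bigr] \ge n - \E[\min(N,n)] \ge n - \E[N] = \eps n,
\]
so $p\ge \eps$. This step is routine.

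\emph{Positive association.} The plan is to express each $F_i$ as a monotone function of independent inputs and then apply \cref{cor:monotone-association}. Poisson thinning gives the right inputs. Split the $\Poisson((1-\eps)n)$ insertion stream over continuous time: each slot $s\in[n]$ receives keys hashing to $s$ at the arrival times of an independent Poisson process of rate $(1-\eps)$ on the interval $[0,1]$. The keys are inserted in order of arrival time.

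The resulting table has the same law as the Poissonized table. The total count is $\Poisson((1-\eps)n)$, and given the count, the hashes are i.i.d.\ uniform and independent of the order.

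For the independent inputs, I would discretize time to make them finite. Take the arrival times to be i.i.d.\ uniform and attach them to the $N$ keys. Alternatively, fix a fine grid of $L$ time steps and let $Y_{s,\ell}\in\{0,1,\ldots\}$ be the number of keys hashing to $s$ that arrive in step $\ell$. These are independent Poisson variables with mean $(1-\eps)/L$. Within a step, ties are broken by a fixed order on slots, which is harmless since the resulting law is exactly the Poissonized one for every $L$. With this description no limit is needed: for each $L$ the model is exactly the Poissonized table, because a Poisson process split into cells with a deterministic ordering rule yields $\Poisson$ many keys with i.i.d.\ uniform hashes in an order that is exchangeable.

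The occupied set is a deterministic function $O(Y)$ of $Y=(Y_{s,\ell})$, with no-ops once the table is full.

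\emph{The main obstacle: monotonicity of $O(Y)$.} The claim is that increasing any $Y_{s,\ell}$ by one can only enlarge $O(Y)$. That change inserts one extra key at some position in the insertion sequence, while all other keys keep their probe sequences, since probe sequences depend only on hashes.

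Before the extra key, the two runs coincide, with occupied set $A$. After it, the modified run has occupied set $B = A\cup\{s'\}\supseteq A$, or $B=A$ if the table is full. The remaining suffix of keys is the same in both runs. By \cref{lem:occupied-set-monotonicity}, the final occupied sets then satisfy $O(Y)\subseteq O(Y')$. The full-table convention is also consistent with the lemma, since a full table stays full.

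Thus each $1-F_i = \ind[i\in O(Y)]$ is coordinatewise nondecreasing in $Y$. Equivalently, each $F_i$ is nonincreasing, and \cref{cor:monotone-association} (nonincreasing case) shows that $(F_i)_{i\in[n]}$ are positively associated.
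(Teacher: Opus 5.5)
Your mean bound and your monotonicity argument are correct and match the paper's. The step you use to avoid a limit, however, is false: for a fixed number $L$ of steps, your grid model does \emph{not} have the law of the Poissonized table. Poisson splitting gives the right law for the multiset of hashes. Within a step, though, your rule inserts keys in increasing order of hash rather than in a uniformly random order, and under quadratic probing the final occupied set depends on the insertion order.

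Concretely, take $n=5$. The offsets are $0,1,4,4,1$, so a key with hash $h$ probes $h$, then $h+1$, then $h-1$. The hash sequences $(0,0,1)$ and $(0,1,0)$ leave the occupied sets $\{0,1,2\}$ and $\{0,1,4\}$. With $L=1$ and conditioned on $N=3$, the occupied set equals $\{3,4,0\}$ with probability $19/125$ in your model but $16/125$ in the Poissonized table. Your rule always inserts the multisets $\{3,3,4\}$ and $\{4,4,0\}$ as $(3,3,4)$ and $(0,4,4)$, both of which produce $\{3,4,0\}$. Under a uniformly random order, only $3$ of the $6$ hash sequences realizing these multisets do so. In particular your model is not even rotation-invariant: in it, the corresponding probability for $\{0,1,2\}$ is $16/125$.

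So the claim that a deterministic tie-breaking rule yields an exchangeable order fails whenever a step contains keys with different hashes. As written, you have proved positive association of a different vacancy vector.

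The missing ingredient is the limit $L\to\infty$, which is how the paper finishes. Your argument does show that the grid vacancy vector $F^{(L)}$ is positively associated for every $L$. Now couple the grid model with the Poissonized one. If no step receives two or more keys, ordering by step is a uniformly random order independent of the hashes, so the two tables can be made to coincide. By a union bound, the exceptional event has probability at most $L\cdot\frac12\bigl((1-\eps)n/L\bigr)^2=O(n^2/L)$. Because the vacancy vectors live in the finite set $\{0,1\}^n$, this gives $\Cov\bigl(f(F^{(L)}),g(F^{(L)})\bigr)\to\Cov\bigl(f(F),g(F)\bigr)$ for all nondecreasing $f,g$, and nonnegativity passes to the limit.

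Your continuous-time description does have the exact law, but its independent inputs are Poisson processes rather than finitely many real-valued coordinates. So \cref{cor:monotone-association} does not apply to it as stated. That route would require a Harris--FKG inequality for Poisson point processes (with monotonicity under adding points, which your occupied-set argument does supply), a separate result not among the tools you invoked.
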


It's worth noting that \cref{lemma:vacancy-mean-and-association} is true only because of Poissonization. Consider, for example, what happens if there are exactly $n - 1$ keys inserted into the table. Then there is exactly one empty slot, so the $F_i$s are actually \emph{negatively} associated. (Of course, we would be happy if the $F_i$s were always negatively associated, but that is also not the case.) Only after Poissonizing the number of keys can we conclude that the $F_i$s are positively associated.

In the discussion, below, we use 
\[
    Z\defeq\sum_{i\in[n]}F_i
\]
to denote the number of empty slots in the Poissonized table. Equivalently,
\begin{align}
    Z=(n-N)_+,
    \label{eq:empty-slots}
\end{align}
where $(\cdot)_+$ replaces negative values by $0$. 

We will make repeated use of the following basic fact, which follows by rotational symmetry: 
\begin{observation}[Translation invariance]
  \label{obs:translation-invariance}
  For every $c$, the shifted vector $(F_{i+c})_{i\in[n]}$ has the same distribution as $(F_i)_{i\in[n]}$, where slot indices are taken modulo $n$.
\end{observation}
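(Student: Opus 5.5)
The plan is a direct symmetry argument: the rotated table arises from the same random process, so its vacancy vector has the same law. Fix the Poissonized experiment: $N\sim\Poisson((1-\eps)n)$, keys $x_1,\ldots,x_N$ inserted in order, with hashes $h(x_1),\ldots,h(x_N)$ i.i.d.\ uniform on $[n]$ and independent of $N$. Given $N$ and the hash values, the final occupied set is a deterministic function
\[
  A = \Phi\bk*{h(x_1),\ldots,h(x_N)}.
\]
The probe sequence of a key is $h(x)+o_1,h(x)+o_2,\ldots$ for a fixed offset sequence $o_j$. This includes the fallback offsets used after the first $n$ probes, which the footnote in the preliminaries fixes independently of $h(x)$. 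The convention that insertions into a full table are no-ops is also translation-independent.

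The key step is the equivariance identity
\[
  \Phi(h_1+c,\ldots,h_N+c) = \Phi(h_1,\ldots,h_N)+c \pmod n,
\]
which I will prove by induction on the number of insertions. Suppose the occupied set before some insertion is $A'$ in the original run and $A'+c$ in the shifted run. The shifted key probes $h+c+o_1,h+c+o_2,\ldots$. A slot $h+c+o_j$ lies in $A'+c$ exactly when $h+o_j\in A'$. So both runs stop at the same index $j$, and the shifted run places the key at the original location plus $c$. The full-table case is immediate, since both tables are full simultaneously.

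To finish, note that $(h_1+c,\ldots,h_N+c)$ has the same joint law as $(h_1,\ldots,h_N)$, conditionally on $N$, because the uniform distribution on $\mathbb{Z}_n$ is translation invariant and the coordinates are independent. Hence $A+c \overset{d}{=} A$. Since $F_{i+c}=\ind[i+c\notin A]=\ind[i\notin A-c]$, the vector $(F_{i+c})_i$ is the vacancy vector of $A-c$, which has the same law as $A$. This gives the claim.

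I do not expect a real obstacle here. The only care needed is to confirm that every part of the insertion rule, including the fallback scan and the overflow convention, depends only on relative offsets from $h(x)$ and never on absolute slot positions; the induction step above verifies exactly this.
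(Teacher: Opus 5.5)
Your proof is correct and is the same rotational-symmetry argument the paper uses. The paper states the observation with only the phrase ``by rotational symmetry''; your equivariance induction $\Phi(h_1+c,\ldots,h_N+c)=\Phi(h_1,\ldots,h_N)+c$, combined with shift-invariance of the i.i.d.\ uniform hashes conditional on $N$, is a rigorous version of that justification, and you rightly check that the fallback offsets and the full-table no-op convention are also shift-equivariant.
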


With this setup in mind, we now show how to use Harris's inequality to prove \cref{lemma:vacancy-mean-and-association}.

\begin{proofof}{\cref{lemma:vacancy-mean-and-association}}
By \cref{obs:translation-invariance}, the means $\E\Bk*{F_i}$ are all equal, so each of them is $\E\Bk*{Z}/n$. Since $(n-N)_+\ge n-N$, \eqref{eq:empty-slots} gives
\[
    \E\Bk*{F_i}=\frac1n\E\Bk*{(n-N)_+}\ge\frac{n-\E\Bk*{N}}n=\eps.
\]

We prove positive association in two steps. First, we discretize the insertion times into finitely many layers and apply Harris's inequality, using the occupied-set monotonicity from \cref{lem:occupied-set-monotonicity}. We then let the number of layers tend to infinity, and argue that the resulting processes converges to the Poissonized model.

Fix an integer $q\ge1$. For every $t\in[q]$ and $i\in[n]$, let $Y_{t,i}$ be independent random variables with distribution $\Poisson((1-\eps)/q)$. Given the $\left(Y_{t,i}\right)_{t\in[q],i\in[n]}$ random variables, construct a quadratic probing hash table by inserting $Y_{t,i}$ keys with hash $i$ in lexicographic order of $(t,i)$, and let $F_i^{(q)}$ indicate whether slot $i$ is empty after all these insertions. 

We now argue that $(F_i^{(q)})_{i\in[n]}$ are positively associated. Increasing one coordinate $Y_{t,i}$ inserts additional keys at that point in the order. Immediately after these additional insertions, the augmented experiment has a superset of the occupied slots in the original experiment; the remaining keys and their probe sequences are the same in both experiments. By \cref{lem:occupied-set-monotonicity}, this inclusion persists through the remaining insertions. Thus every $F_i^{(q)}$ is coordinatewise nonincreasing in the independent family $(Y_{t,i})_{t\in[q],i\in[n]}$. \Cref{cor:monotone-association} therefore implies that $(F_i^{(q)})_{i\in[n]}$ is positively associated.

We next compare the above experiment, as the number $q$ of layers tends towards infinity, to the Poissonized table. For $t \in [q]$, let $Y_t=\sum_{i\in[n]}Y_{t,i}$ be the number of keys in layer $t$. The variables $Y_t$ are independent with distribution $\Poisson((1-\eps)n/q)$, and their sum is exactly $\Poisson((1-\eps)n)$. Conditional on the values $(Y_t)_{t\in[q]}$, the keys in every layer have independent uniform hashes. Moreover, for a Poisson random variable of mean $\lambda$, the probability of the random varaible being at least $2$ is at most $\lambda^2/2$. A union bound therefore gives
\[
    \Pr[\text{some layer contains at least two keys}]
    \le q\cdot\frac{((1-\eps)n/q)^2}{2}
    =O\bk*{\frac{n^2}{q}}.
\]
Equivalently, one may generate the layered experiment by first sampling the Poisson number of keys and then assigning each key an independent uniform layer and an independent uniform hash. If every layer contains at most one key, ordering the keys by their layers merely permutes independent uniform hashes, so the resulting insertion sequence has exactly the desired Poissonized distribution. The two experiments can therefore differ only on the event displayed above, whose probability tends to zero. Hence the layered insertion sequence converges in distribution to the Poissonized insertion sequence as $q\to\infty$. Since the vacancy vector takes values in the finite set $\{0,1\}^n$, expectations of bounded functions of that vector converge as well. The covariance inequalities defining positive association consequently pass to the limit, proving that $(F_i)_{i\in[n]}$ is positively associated.
\end{proofof}

\paragraph{Defining the vacancy covariances.}
By \cref{obs:translation-invariance}, the covariance $\Cov(F_i,F_{i+r})$ does not depend on $i$. We may therefore define the \defn{covariance function}
\[
    C(r)\defeq\Cov(F_i,F_{i+r}),
\]
where $i\in[n]$ is arbitrary. Since slot indices are taken modulo $n$, the value $C(r)$ depends only on $r$ modulo $n$. Moreover, because the $F_i$s are positively associated, the covariance is guaranteed to be nonnegative:

\begin{observation}
  \label{obs:covariance-nonnegative}
  For every integer $r$, we have $C(r)\ge0$.
\end{observation}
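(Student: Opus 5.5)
This observation follows directly from the positive association established in \cref{lemma:vacancy-mean-and-association}, applied to the simplest possible monotone test functions. Fix an integer $r$ and any index $i\in[n]$. Let $j\equiv i+r \pmod n$. By the definition of positive association, applied to the family $(F_k)_{k\in[n]}$, we have $\Cov\bk*{f(F),g(F)}\ge0$ for any coordinatewise nondecreasing $f,g$. We choose the coordinate projections $f(F)\defeq F_i$ and $g(F)\defeq F_j$. Both are coordinatewise nondecreasing, and both are bounded, so the covariance is defined. This immediately gives $\Cov(F_i,F_j)\ge0$.

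To conclude, we identify this quantity with $C(r)$. By definition, $C(r)=\Cov(F_i,F_{i+r})$. By \cref{obs:translation-invariance}, this value does not depend on $i$, and it depends on $r$ only modulo $n$. Hence $C(r)=\Cov(F_i,F_j)\ge0$ for every integer $r$.

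The degenerate case $r\equiv0\pmod n$ needs a separate check, since then $f=g$. Here $C(0)=\Var(F_i)\ge0$ trivially, and this is also consistent with the association inequality. There is no real obstacle. All of the work lies in \cref{lemma:vacancy-mean-and-association}, which relies on Poissonization via the layered Harris argument. The only point worth stating explicitly is that single coordinates are admissible monotone functions in the definition of positive association.
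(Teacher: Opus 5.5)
Your proof is correct and matches the paper's argument: you take the coordinate projections $F_i$ and $F_{i+r}$ as the two nondecreasing test functions in the definition of positive association, which \cref{lemma:vacancy-mean-and-association} supplies. Your separate check for $r\equiv 0\pmod n$ is unnecessary, since the association inequality already covers the case $f=g$, but it does no harm.
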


\begin{proof}
By construction, $C(r) = \Cov(f, g)$ where $f = F_i$ and $g = F_{i + r}$ are both nondecreasing functions of $(F_i)_{i\in[n]}$. Since $(F_i)_{i\in[n]}$ are positively associated (Lemma \ref{lemma:vacancy-mean-and-association}), it follows that $C(r)\ge0$.
\end{proof}

The fact that $C(r)$ is nonnegative will play a critical role in the next section, where it allows us to obtain a clean bound on the sum $\sum_{0\le a<b<k}C(b^2-a^2)$ of the covariances between slots that are probed in the first $k$ steps of an insertion. 

\section{Bounding the Vacancy Covariances along a Probe Sequence}
\label{sec:covariances}

Now consider what happens when we insert one more key into the Poissonized table (i.e., the $(N+1)$-st insertion). By \cref{obs:translation-invariance}, we may assume that the new key's hash is $0$, so that the first $k$ probes by the insertion are to slots $0^2,1^2,\ldots,(k-1)^2$. To reason about the number of free slots among these probes, it will be helpful to obtain a bound on
\begin{equation}
    \sum_{0\le a<b<k} \Cov(F_{a^2}, F_{b^2}) = \sum_{0\le a<b<k} C(b^2-a^2),
    \label{eq:sum-of-covariances-along-probe-sequence}
\end{equation}
which is the sum of the covariances over all distinct pairs from $F_{0^2}, F_{1^2}, \ldots, F_{(k-1)^2}$.

Define
\begin{align}
    R(k)\defeq\max_{1\le r<n}\abs*{\{(a,b):0\le a<b<k,\ b^2-a^2\equiv r\pmod n\}}
    \label{eq:def-R}
\end{align}
to be the largest number of times that any $r \in [n]$ appears as a difference between squares $b^2 - a^2 \pmod n$ for distinct $a < b \in [k]$. Equivalently, if we rewrite \eqref{eq:sum-of-covariances-along-probe-sequence} as a linear combination of $C(0), \ldots, C(k - 1)$, then $R(k)$ is the largest coefficient in this linear combination. The next proposition is one of the central technical observations of the paper: it says that the sum of covariances in \eqref{eq:sum-of-covariances-along-probe-sequence} is at most $R(k)$.

\begin{proposition}
  \label{prop:small-covariances-along-probe-sequence}
  Let $n$ be prime and let $1\le k\le n/2$. Then
  \[
      \sum_{0\le a<b<k}C(b^2-a^2)\le R(k).
  \]
\end{proposition}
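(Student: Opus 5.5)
We bound the sum coefficient by coefficient, then use the nonnegativity of $C$ together with a global identity for $\sum_{r} C(r)$.

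\textbf{Step 1 (regrouping).} By \cref{obs:distinct-probes}, for $0\le a<b<k\le n/2$ we have $b^2-a^2\not\equiv 0\pmod n$. Hence every term in the sum has the form $C(r)$ with $r\in\{1,\ldots,n-1\}$. Writing $c_r$ for the number of pairs $(a,b)$ with $b^2-a^2\equiv r$, we get
\[
  \sum_{0\le a<b<k}C(b^2-a^2)=\sum_{r=1}^{n-1}c_r\,C(r)\le R(k)\sum_{r=1}^{n-1}C(r).
\]
The inequality uses $c_r\le R(k)$ together with \cref{obs:covariance-nonnegative}. This is exactly where positive association matters: discarding the terms with $c_r=0$, and inflating the others to coefficient $R(k)$, can only increase the sum because each $C(r)\ge 0$.

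\textbf{Step 2 (the global covariance sum).} It remains to show $\sum_{r=1}^{n-1}C(r)\le 1$. We compute $\Var(Z)$ in two ways. Using translation invariance (\cref{obs:translation-invariance}),
\[
  \Var(Z)=\sum_{i,j}\Cov(F_i,F_j)=n\sum_{r\in[n]}C(r)=n\,C(0)+n\sum_{r=1}^{n-1}C(r).
\]
Here $C(0)=\Var(F_0)=p(1-p)$, where $p=\E[F_0]$. On the other side, $Z=(n-N)_+$ with $N\sim\Poisson((1-\eps)n)$. The map $x\mapsto(n-x)_+$ is $1$-Lipschitz, so $\Var(Z)\le\Var(N)=(1-\eps)n$; this holds because $\Var(f(N))=\tfrac12\E[(f(N)-f(N'))^2]\le\tfrac12\E[(N-N')^2]$ for an independent copy $N'$. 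Combining the two computations gives
\[
  \sum_{r=1}^{n-1}C(r)\le (1-\eps)-p(1-p).
\]

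\textbf{Step 3 (closing).} We need $(1-\eps)-p(1-p)\le 1$, which holds trivially since $p(1-p)\ge 0$ and $\eps>0$. Thus $\sum_{r\ne 0}C(r)\le 1$, and Step 1 yields the claim.

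The only real content is Step 2. It requires the variance identity via translation invariance and the observation that truncation $(n-N)_+$ does not increase variance beyond that of a Poisson variable, whose variance is at most $n$. Both are elementary, so I expect no serious obstacle. The one point to check is that the bound must hold without appealing to $p\ge\eps$, and it does, since $p(1-p)$ is simply dropped.
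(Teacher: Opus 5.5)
Your proof is correct and follows the paper's argument: you regroup the sum by residue, using distinct probes and $C(r)\ge 0$, and then bound the covariance total through $n\sum_{r}C(r)=\Var(Z)\le\Var(N)$. The differences are cosmetic: you justify the variance contraction via the $1$-Lipschitz independent-copy identity and subtract $C(0)$ explicitly, whereas the paper bounds the full sum $\sum_{r=0}^{n-1}C(r)\le 1$ and then drops the nonnegative $r=0$ term.
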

Of course, whether Proposition \ref{prop:small-covariances-along-probe-sequence} is useful depends on whether $R(k)$ is small. Later in the paper, we will see that, when $k \le \sqrt{n}$, we have $R(k) = k^{o(1)}$. (This is to say that, to a first approximation, the distances $\{b^2-a^2 \pmod n: 0 \le a < b < k\}$ are mostly distinct.) The fact that $R(k)$ is so small is a one of the key properties that distinguishes quadratic probing from linear probing, and is the main reason that we are able to in our final analysis obtain a time bound of $\epsilon^{-(1 + o(1))}$ (which is better than the $\epsilon^{-2}$-style bound that holds for linear probing).

The main step in proving \cref{prop:small-covariances-along-probe-sequence} is to first obtain the following bound on the sum of all $C(r)$s.

\begin{lemma}
  \label{lemma:total-covariance}
  We have $\sum_{r=0}^{n-1}C(r)\le1$.
\end{lemma}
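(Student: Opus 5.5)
The plan is to recognize $\sum_{r}C(r)$ as a normalized variance of the total number of empty slots $Z=\sum_i F_i$. By translation invariance (\cref{obs:translation-invariance}) we have
\[
    \Var(Z)=\sum_{i\in[n]}\sum_{j\in[n]}\Cov(F_i,F_j)=\sum_{i\in[n]}\sum_{r=0}^{n-1}C(r)=n\sum_{r=0}^{n-1}C(r),
\]
so the lemma is equivalent to the bound $\Var(Z)\le n$. The question is then entirely about the one-dimensional random variable $Z=(n-N)_+$ from \eqref{eq:empty-slots}, where $N\sim\Poisson((1-\eps)n)$. The table structure and the probing scheme no longer matter.

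The key step is to bound $\Var((n-N)_+)$. The map $x\mapsto(n-x)_+$ is $1$-Lipschitz, so for an independent copy $N'$ of $N$,
\[
    \Var\bigl((n-N)_+\bigr)=\tfrac12\E\Bigl[\bigl((n-N)_+-(n-N')_+\bigr)^2\Bigr]\le\tfrac12\E\bigl[(N-N')^2\bigr]=\Var(N).
\]
Since $\Var(N)=(1-\eps)n\le n$, this gives $\Var(Z)\le n$. Dividing by $n$ yields $\sum_{r=0}^{n-1}C(r)\le 1-\eps\le 1$.

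The only point needing care is the first identity. It requires $C(r)$ to be well defined for every $r$ modulo $n$, including $r=0$, where $C(0)=\Var(F_i)$. It also requires the double sum over $(i,j)$ to reindex bijectively as $(i,r)$ with $r=j-i \bmod n$. Both hold because indices are taken modulo $n$ and the distribution is rotationally invariant. I do not expect a real obstacle. Positive association is not even needed for this lemma; it enters only afterwards, when \cref{obs:covariance-nonnegative} is used to drop terms and pass to \cref{prop:small-covariances-along-probe-sequence}. If the Lipschitz/independent-copy step seemed too slick, I would instead verify it directly: truncation at $0$ can only shrink the variance, since $\Var(\phi(N))\le\Var(N)$ for any $1$-Lipschitz $\phi$, and the Poisson variance equals its mean.
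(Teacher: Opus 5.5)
Your proposal is correct and takes essentially the same route as the paper. Both expand $\Var(Z)=n\sum_{r=0}^{n-1}C(r)$ via translation invariance and then bound $\Var(Z)=\Var(\min(N,n))\le\Var(N)=(1-\eps)n\le n$; the only difference is that you justify the truncation step with the $1$-Lipschitz/independent-copy identity, whereas the paper simply asserts $\Var(X)\ge\Var(\min(X,n))$.
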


\begin{proof}
Recall from \eqref{eq:empty-slots} that the number of empty slots is $Z=(n-N)_+$. Expanding the variance of $Z$,
\[
\begin{aligned}
    \Var(Z)
    &=\sum_{i\in[n]}\sum_{j\in[n]}\Cov(F_i,F_j)\\
    &=n\sum_{r=0}^{n-1}C(r).
\end{aligned}
\]

For any random variable $X$, we have $\Var(X) \ge \Var(\min(X, n))$. Since $\Var(Z) = \Var(\min(N, n))$, it follows that $\Var(Z) \le \Var(N)$. 
Since $N$ is a Poisson random variable with mean and variance $(1-\eps)n$, this implies that $\Var(Z) \le n$. Plugging this into the identity above gives $\sum_{r=0}^{n-1}C(r) \le 1$.
\end{proof}

Note that, if we were using a fixed-load table rather than a Poissonized one, the argument above would give the even stronger-looking identity $\sum_{r=0}^{n-1}C(r)=0$, because $Z$ would be non-random and hence $\Var(Z)$ would be $0$. But that identity would be useless to us, since the only way that $\sum C(r)$ can be zero is if some of the terms are cancelling with others. What makes \cref{lemma:total-covariance} useful to us is that, thanks to Poissonization, the $C(r)$s are nonnegative.

Because the $C(r)$s are nonnegative, we can deduce from \cref{lemma:total-covariance} that $\sum_{r\in S}C(r)\le1$ for every subset $S\subseteq[n]$. This leads to the following proof of \cref{prop:small-covariances-along-probe-sequence}.

\begin{proofof}{\cref{prop:small-covariances-along-probe-sequence}}
Grouping the pairs according to the residue $r=b^2-a^2$ gives
\[
    \sum_{0\le a<b<k}C(b^2-a^2) =\sum_{r=1}^{n-1}\abs*{\{(a,b):0\le a<b<k,\ b^2-a^2\equiv r\pmod n\}} \cdot C(r),
\]
where the omission of the $r = 0$ term makes use of the fact that $b^2 - a^2 \not\equiv 0 \pmod n$ for all $0 \le a < b < k$ (Observation \ref{obs:distinct-probes}).
By the definition of $R(k)$, and the fact that $C(r)\ge0$, this is at most
$$R(k)\sum_{r=1}^{n-1}C(r),$$
which by \cref{lemma:total-covariance} is at most $R(k)$.
\end{proofof}

%% file: covariance-tail-bound.tex
In this section we convert the covariance estimate from Proposition~\ref{prop:small-covariances-along-probe-sequence} into a bound on the insertion time in the Poissonized table. The main result of the section is the following tail bound, which relates the insertion time to the quantity $R(k)$ defined in \eqref{eq:def-R}.

\begin{restatable}[Poissonized tail bound in terms of $R(k)$]{proposition}{proptailviaR}
  \label{prop:tail-via-R}
  Let $n$ be prime and let $1<k\le n/2$, and let $T$ denote the time to insert one more key into the Poissonized table. Then
  \begin{align}
      \Pr[T>k]=O\bk*{\frac{R(k)}{\eps^2k^2}}.
      \label{eq:tail-via-R}
  \end{align}
\end{restatable}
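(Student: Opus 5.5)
The event $T>k$ means that the first $k$ probes $0^2,1^2,\ldots,(k-1)^2$ (taking the new key's hash to be $0$ by \cref{obs:translation-invariance}) all land on occupied slots. So, with $S\defeq\sum_{j<k}F_{j^2}$, we need to bound $\Pr[S=0]$. By \cref{obs:distinct-probes} these are $k$ distinct slots, and by \cref{lemma:vacancy-mean-and-association} we have $\E[S]\ge\eps k$. The cheapest route is Chebyshev's inequality:
\[
\Pr[S=0]\le\frac{\Var(S)}{\E[S]^2}.
\]
Expanding the variance gives
\[
\Var(S)=\sum_{j<k}\Var(F_{j^2})+2\sum_{0\le a<b<k}C(b^2-a^2)\le k\,\E[F_0]+2R(k),
\]
where the pairwise sum is bounded by \cref{prop:small-covariances-along-probe-sequence}. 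Writing $p\defeq\E[F_0]\ge\eps$, this yields
\[
\Pr[S=0]\le \frac{kp+2R(k)}{p^2k^2}\le\frac{1}{\eps k}+\frac{2R(k)}{\eps^2k^2}.
\]
The second term is exactly the target bound. The first term, $1/(\eps k)$, is the obstacle: Chebyshev only charges the diagonal variance at rate $1/\E[S]$, whereas independent-like behaviour would give $e^{-\eps k}$. I need to replace Chebyshev by a concentration inequality for positively associated indicators whose loss is controlled only by the off-diagonal covariance mass. This is the custom inequality promised in the introduction for this section.

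The plan for that inequality is the following claim. Let $X_1,\ldots,X_k\in\{0,1\}$ be positively associated with $\sum_i\E[X_i]=\mu$ and $\sum_{i\ne j}\Cov(X_i,X_j)=V$. Then
\[
\Pr\Bigl[\sum_i X_i=0\Bigr]\le e^{-\mu}+O\bigl(V/\mu^2\bigr).
\]
Applied with $\mu\ge\eps k$ and $V\le 2R(k)$, the term $e^{-\eps k}$ is $O(1/(\eps^2k^2))$, which is at most $O(R(k)/(\eps^2k^2))$ since $R(k)\ge1$ for $k>1$ (the pair $(0,1)$ realizes some residue). This gives \eqref{eq:tail-via-R}.

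To prove the claim, I would first try the identity
\[
\Pr[S=0]=\prod_i \Pr\bigl[X_i=0\mid X_1=\cdots=X_{i-1}=0\bigr]
\]
and compare each conditional factor to $1-\E[X_i]$. The loss in factor $i$ is governed by how much the event $\{X_1=\cdots=X_{i-1}=0\}$ depresses $\E[X_i]$, which one can bound via covariances of $X_i$ with the earlier variables. Positive association ensures these correction terms are nonnegative, so they can be summed. The difficulty is that conditioning on a rare event can amplify covariances by $1/\Pr[\cdot]$.

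An alternative I would pursue in parallel is a truncated second-moment argument. Split the indices into $m\approx \mu^2/V$ blocks, possibly interleaved, each with mean about $\mu/m$. On one hand, use Harris (\cref{thm:harris}) in the direction $\Pr[\text{all blocks empty}]\ge\prod\Pr[\text{block empty}]$; this is the wrong direction for an upper bound, so positive association must instead enter through a bound on $\Var\bigl(\sum_b \ind[\text{block } b \text{ nonempty}]\bigr)$. Here the functions $\ind[\text{block nonempty}]$ are monotone, so their covariances are nonnegative and bounded by the sum of pairwise covariances between the blocks (e.g.\ via $\Cov(\max_{i\in A}X_i,\max_{j\in B}X_j)\le\sum_{i\in A,j\in B}\Cov(X_i,X_j)$, a Newman-type inequality for associated variables). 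Then Chebyshev on the count of nonempty blocks, with mean about $m(1-e^{-\mu/m})$ and variance at most $m+V$, should give $O(V/\mu^2)$ once the blocks are balanced. Getting the diagonal term $m$ to be $O(V/\mu^2)\cdot m^2$ is where the block size must be tuned, and I expect this balancing—killing the $1/\mu$ term without leaving a residual—to be the main technical step.
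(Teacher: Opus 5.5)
\textbf{There is a genuine gap: the key concentration inequality is assumed, and neither route you sketch for it can work.}

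Your reduction is the same as the paper's. You fix the hash to $0$, note that $T>k$ iff $S=\sum_{a<k}F_{a^2}=0$, use $\E[S]\ge\eps k$, bound the off-diagonal covariance by $R(k)$ via \cref{prop:small-covariances-along-probe-sequence}, and absorb the exponential term using $R(k)\ge 1$. But the whole substance of the proposition is the inequality $\Pr[S=0]\le e^{-\Omega(\mu)}+O(V/\mu^2)$ for positively associated indicators, and you have not proved it.

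\emph{Chain-rule route.} Positive association gives $\Pr[X_i=0\mid X_1=\cdots=X_{i-1}=0]\ge\Pr[X_i=0]$, which is the wrong direction for an upper bound. The correction $\Cov(X_i,\max_{j<i}X_j)/\Pr[X_1=\cdots=X_{i-1}=0]$ can also be inflated by a factor $e^{\Theta(\mu)}$, as you yourself note.

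\emph{Block route.} This fails for a structural reason. Take $Y_b=\max_{i\in B_b}X_i$ and $W=\sum_b Y_b$. By the union bound, $\E[Y_b]\le\min\{1,\sum_{i\in B_b}\E[X_i]\}$, hence $\E[W]\le\min\{m,\mu\}$. So if you bound the diagonal part of $\Var(W)$ by $m$, Chebyshev gives at least $m/\min\{m,\mu\}^2\ge 1/\mu$ for every choice of $m$. No tuning of the block size removes the $1/\mu$ (i.e.\ $1/(\eps k)$) term. The diagonal can be small only if each $\Var(Y_b)\le\Pr[Y_b=0]$ is small. That requires an upper bound on the probability that a block is empty, which is the original problem again; Harris only gives the lower bound $\Pr[Y_b=0]\ge\prod_{i\in B_b}\Pr[X_i=0]$.

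\emph{The missing idea.} The paper compares $S$ to independent copies in convex order, rather than using any form of Chebyshev. Let $S^\perp$ be the sum of independent copies $X_i^\perp\sim X_i$. For positively associated variables, $\E[\varphi(S^\perp)]\le\E[\varphi(S)]$ for every convex $\varphi$ \cite{boutsikasvaggelatou2002distance}. Set $a=\mu/2$, $h(x)=(\max\{0,1-x/a\})^2$, and take the convex function $g(x)=x^2/a^2-h(x)$. This gives $\E[h(S)]\le\E[h(S^\perp)]+(\E[S^2]-\E[(S^\perp)^2])/a^2$. Because $S$ and $S^\perp$ have the same mean and the same diagonal variances, the second-moment difference is exactly $2\sum_{i<j}\Cov(X_i,X_j)$. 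The diagonal term that ruins Chebyshev cancels identically. Now use $\mathbf{1}[S=0]\le h(S)$, $h(S^\perp)\le\mathbf{1}[S^\perp<a]$, and a Chernoff bound on $S^\perp$. This yields $\Pr[S=0]\le e^{-\mu/8}+8d/\mu^2$ with $d=\sum_{i<j}\Cov(X_i,X_j)$, and the rest of your argument then goes through unchanged.
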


By \cref{obs:translation-invariance}, we may assume that the final key being inserted has hash $0$, so its first $k$ probes are the slots $0^2,1^2,\ldots,(k-1)^2$. The insertion incurs more than $k$ probes if and only if
$$F_{0^2}=F_{1^2}=\cdots=F_{(k-1)^2}=0.$$
The following lemma gives a general-purpose bound on the probability of a collection of positively associated indicator random variables being all $0$.
\begin{lemma}
  \label{lem:covariance-tail-bound}
  Let $X_1,\dots,X_n$ be positively associated random variables taking values in $\{0,1\}$. Let $S=\sum_{i=1}^{n}X_i$, let $\mu=\E\Bk*{S}>0$, and let $d=\sum_{1\le i<j\le n}\Cov(X_i,X_j)$. Then
  \[
      \Pr[S=0]\le e^{-\mu/8}+\frac{8d}{\mu^2}.
  \]
\end{lemma}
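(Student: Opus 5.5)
The plan is to combine two ingredients. The first is an exponential-moment (Chernoff-type) bound that would be exact if the $X_i$ were independent. The second is a correction term, controlled by the pairwise covariances, that measures how far positive association pushes us away from independence. The target bound has exactly this shape: $e^{-\mu/8}$ is what a Chernoff bound on $\Pr[S\le \mu/2]$ gives for independent indicators, and $8d/\mu^2$ is what Chebyshev gives for the cross terms alone.

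Plain Chebyshev does not suffice, and this dictates the structure of the argument. The inequality $\Pr[S=0]\le \Var(S)/\mu^2 = \bk*{\sum_i\Var(X_i)+2d}/\mu^2 \le 1/\mu + 2d/\mu^2$ already handles the off-diagonal part. The diagonal part, however, costs $1/\mu$ rather than $e^{-\mu/8}$. The goal is therefore to treat the diagonal exponentially and only the off-diagonal via second moments.

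\emph{Step 1 (independent comparison).} Let $\hat X_1,\dots,\hat X_n$ be independent with $\hat X_i\sim X_i$, and $\hat S=\sum_i \hat X_i$. By the standard multiplicative Chernoff bound,
\[
  \Pr[\hat S\le \mu/2]\le e^{-\mu/8}.
\]

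\emph{Step 2 (covariance-weighted comparison).} I would prove that for the nonincreasing test function $f(s)=\ind[s=0]$, or a smoothed version $f_\lambda(s)=e^{-\lambda s}$, one has
\[
  \E[f(S)]-\E[f(\hat S)] \le (\text{Lipschitz factor})\cdot d .
\]
This is a Newman/Lebowitz-type inequality for associated variables. I would prove it by the interpolation/telescoping argument: replace $X_1,\dots,X_n$ by their independent copies one coordinate at a time. At each step the error is a covariance between a monotone function of $X_j$ and a monotone function of $(X_i)_{i<j}$. Because each of these functions has coordinatewise increments at most $1$, and positive association (\cref{cor:monotone-association}-style reasoning) is available, this covariance is bounded by $\sum_{i<j}\Cov(X_i,X_j)$ times the relevant increment.

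\emph{Step 3 (scaling).} Newman's bound with $f_\lambda$ gives only an error of order $\lambda^2 d$. To turn this into $d/\mu^2$, I would not apply the comparison to $S$ directly. Instead I would split on the event $\{S\le\mu/2\}$ and write $\Pr[S=0]\le \Pr[S\le \mu/2]$. I would then bound $\Pr[S\le\mu/2]-\Pr[\hat S\le\mu/2]$ through a piecewise-linear nonincreasing surrogate $g$ that equals $1$ on $[0,\mu/4]$ and $0$ on $[\mu/2,\infty)$, with slope $4/\mu$. Coordinatewise increments of $g(S)$ are then at most $4/\mu$. The telescoping covariance error therefore becomes $(4/\mu)^2\cdot\sum_{i<j}\Cov(X_i,X_j)\cdot O(1)$, which is $O(d/\mu^2)$. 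Combined with Step 1 applied to the threshold $\mu/2$ (losing only constants), this yields $e^{-\mu/8}+8d/\mu^2$ after tuning the constants.

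The main obstacle is Step 2/3: establishing the Newman-type comparison
\[
  \abs*{\E g(S)-\E g(\hat S)}\le \sum_{i<j}\norm{\Delta_i g}_\infty\norm{\Delta_j g}_\infty\Cov(X_i,X_j)
\]
for a nonlinear $g$ of the sum. Two difficulties arise. First, the first-order increments in the telescoping are covariances between $X_j$ and a \emph{non-monotone-in-general} conditional expectation. I would handle this by decomposing $g$ into a difference of nondecreasing pieces, and by using that, for associated $\{0,1\}$ variables, $\Cov(\phi(X_{<j}),X_j)\le \norm{\Delta\phi}_\infty\sum_{i<j}\Cov(X_i,X_j)$ for coordinatewise-Lipschitz $\phi$ (Newman's lemma, which I would prove by the same telescoping). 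Second, one must check that the second-order term really carries the factor $(4/\mu)^2$ rather than $4/\mu$. I would try first the direct route: a Taylor expansion of $g$ along the coupling, keeping track of the second-difference bound $\abs{\Delta_i\Delta_j g}\le 4/\mu$ together with the first-difference bound $4/\mu$. If that yields only $O(d/\mu)$, I would fall back on a smoother surrogate (quadratic spline of width $\mu/2$) so that both increment bounds scale like $1/\mu$ and their product gives the required $d/\mu^2$.
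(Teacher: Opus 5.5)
\textbf{The key comparison inequality is false, so Step~3 does not give $d/\mu^2$.} The inequality you single out as the main obstacle,
\[
  \abs*{\E g(S)-\E g(\hat S)}\le \sum_{i<j}\|\Delta_i g\|_\infty\|\Delta_j g\|_\infty\Cov(X_i,X_j),
\]
fails for a general function $g$ of the sum.

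Take $X_1=X_2=B$ with $B\sim\mathrm{Bernoulli}(1/2)$, which is positively associated, and $g(s)=c\max\{0,1-s\}$ with $0<c<1$. Then $\E g(S)=c/2$ and $\E g(\hat S)=c/4$, so the left side is $c/4$. The right side is $c^2\Cov(X_1,X_2)=c^2/4$, which is smaller.

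Here is what your telescoping actually produces. Set $W_j=\sum_{i<j}X_i$, $V_j=\sum_{i>j}\hat X_i$ and $D_j(w)=\E\Bk*{g(w+1+V_j)-g(w+V_j)}$. Swapping $\hat X_j$ for $X_j$ in the hybrid $X_1,\dots,X_{j-1},\cdot,\hat X_{j+1},\dots,\hat X_n$ costs exactly $\Cov\bk*{X_j,D_j(W_j)}$. The increments of $D_j$ are averaged \emph{second} differences of $g$. Your own Newman lemma applied to $\phi=D_j$ therefore gives
\[
  \abs*{\E g(S)-\E g(\hat S)}\le \|\Delta^2 g\|_\infty\, d,
\]
and the example above shows this is tight. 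The product-of-slopes form is what one gets for exponential test functions such as $e^{i\lambda s}$ (Newman's characteristic-function inequality), where curvature is the square of the slope. For your piecewise-linear surrogate, $\|\Delta^2 g\|_\infty$ is of order $1/\mu$, supported at the two kinks. The comparison therefore only gives $O(d/\mu)$, and the factor $(4/\mu)^2$ in Step~3 does not follow.

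\textbf{How to repair it, and how this relates to the paper.} Your fallback to a quadratic spline is the right repair. It works because the \emph{second} difference becomes $O(1/\mu^2)$, not because two slopes multiply. To get the stated constant $8$, you must also choose the surrogate carefully. You only need $h(0)=1$, $h\ge0$ and $h=0$ on $[\mu/2,\infty)$; insisting that $h$ equal $1$ on $[0,\mu/4]$ costs a constant factor. Take $h(s)=(\max\{0,1-2s/\mu\})^2$, for which $0\le\Delta^2h\le 8/\mu^2$, and use the one-sided form of Newman's lemma. Since $(8/\mu^2)w-D_j(w)$ is nondecreasing, association gives $\Cov(X_j,D_j(W_j))\le(8/\mu^2)\sum_{i<j}\Cov(X_i,X_j)$. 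Summing over $j$,
\[
  \Pr[S=0]\le\E h(S)\le\E h(\hat S)+\frac{8d}{\mu^2}\le\Pr[\hat S<\mu/2]+\frac{8d}{\mu^2},
\]
and Chernoff finishes. You never need to pass through $\Pr[S\le\mu/2]$.

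This corrected argument is the paper's proof with the black box opened. The paper cites the convex-order comparison $\E\varphi(S^{\perp})\le\E\varphi(S)$ for convex $\varphi$ (Boutsikas--Vaggelatou). It applies this to $x^2/a^2-h(x)$ with $a=\mu/2$, and convexity of that function is exactly the one-sided curvature bound $h''\le 2/a^2$. The convex-order comparison itself is proved by the same coordinate-by-coordinate swap. So your route, once fixed, is a self-contained proof of the special case the paper imports.
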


It is worth noting that Lemma \ref{lem:covariance-tail-bound} is one of two places in the analysis where we make critical use of positive association (the first being the proof of Proposition~\ref{prop:small-covariances-along-probe-sequence}). Without positive association, the natural substitute for Lemma \ref{lem:covariance-tail-bound} would be Chebyshev's inequality, which says that $\Pr[S=0]\le \Var(S)/k^2$. In the case where $S = \sum_{i = 1}^k F_{i^2}$, the variance $\Var(S)$ is dominated by $\sum_{i = 1}^k \Var(F_{i^2}) = \Theta(k \epsilon)$. Therefore, without positive association, the bound we would be left with is $\Pr[S=0]\le O(k \epsilon)/k^2 = O(\epsilon/k)$, which would be too weak to get \emph{any} non-trivial bound on insertion time as a function of $\epsilon^{-1}$. In contrast, Lemma \ref{lem:covariance-tail-bound} allows us to get a bound closer to $O(1/k^2)$ so long as $d$ is small. 

\begin{proof}[Proof of Lemma \ref{lem:covariance-tail-bound}]
The idea is to compare $S$ with the sum of independent Bernoulli variables  $X^{\perp}_1, \ldots, X^{\perp}_n$, where each $X^{\perp}_i$ has the same distribution as $X_i$, and to bound the cost of this comparison via Claim \ref{claim:convex-order-inequality} below. We can then analyze the sum of independent Bernoulli variables with a standard Chernoff bound.

Let $X^{\perp}_1,\dots,X^{\perp}_n$ be mutually independent random variables, independent of $(X_i)_{i\in[n]}$, such that $X_i^{\perp}$ has the same distribution as $X_i$. Write $S^{\perp}=\sum_{i=1}^{n}X_i^{\perp}$. We make use of the following claim.

\begin{claim}[Corollary 2 of \cite{boutsikasvaggelatou2002distance}, rephrased]
  \label{claim:convex-order-inequality}
  For every convex function $\varphi$ for which the expectations exist,
  \[
      \E\Bk*{\varphi(S^{\perp})}\le\E\Bk*{\varphi(S)}.
  \]
\end{claim}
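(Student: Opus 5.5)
The plan is a hybrid (swapping) argument that replaces the independent copies $X_i^{\perp}$ by the true variables $X_i$ one coordinate at a time. Each single swap should not decrease $\E\Bk*{\varphi(\cdot)}$, and the key input is the positive association of $(X_i)$.

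For $0\le j\le n$, define the hybrid sums
\[
    S_j\defeq X_1+\cdots+X_j+X_{j+1}^{\perp}+\cdots+X_n^{\perp}.
\]
Then $S_0=S^{\perp}$ and $S_n=S$, so it suffices to show $\E\Bk*{\varphi(S_{j-1})}\le\E\Bk*{\varphi(S_j)}$ for each $j$. All variables take values in $\{0,1\}$, so every $S_j$ lies in $\{0,\dots,n\}$ and all expectations are finite.

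Fix $j$. Write $W\defeq X_1+\cdots+X_{j-1}$ and $T\defeq X_{j+1}^{\perp}+\cdots+X_n^{\perp}$. Then $T$ is independent of $(X_1,\dots,X_n)$ and of $X_j^{\perp}$, and $X_j^{\perp}$ is independent of everything else. Define $\psi(x)\defeq\E\Bk*{\varphi(x+T)}$. As an average of translates of a convex function, $\psi$ is convex on the integers. Conditioning on $T$ reduces the step to showing
\[
    \E\Bk*{\psi(W+X_j^{\perp})}\le\E\Bk*{\psi(W+X_j)}.
\]

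The key trick is that $X_j$ is $\{0,1\}$-valued. Hence $\psi(W+X_j)=\psi(W)+X_j\,\Delta\psi(W)$, where $\Delta\psi(w)\defeq\psi(w+1)-\psi(w)$, and the same identity holds with $X_j^{\perp}$. Using independence of $X_j^{\perp}$ from $W$ and $\E\Bk*{X_j^{\perp}}=\E\Bk*{X_j}$, the difference of the two sides equals
\[
    \E\Bk*{X_j\,\Delta\psi(W)}-\E\Bk*{X_j}\E\Bk*{\Delta\psi(W)}=\Cov\bk*{X_j,\Delta\psi(W)}.
\]
Convexity of $\psi$ makes $\Delta\psi$ nondecreasing. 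So $\Delta\psi(X_1+\cdots+X_{j-1})$ and $X_j$ are both coordinatewise nondecreasing functions of $(X_1,\dots,X_n)$, and positive association gives a nonnegative covariance. Chaining the $n$ steps yields the claim.

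The only delicate point is the discrete convexity step: we need that $\Delta\psi$ is nondecreasing on integers, which follows directly from convexity of $\varphi$ restricted to $\mathbb{Z}$. Everything else is bookkeeping of the independence structure, so I do not expect a serious obstacle. The result is a direct, self-contained alternative to citing \cite{boutsikasvaggelatou2002distance}.
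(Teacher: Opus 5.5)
Your argument is correct, and it takes a different route from the paper. The paper gives no proof of this claim; it imports the convex-order comparison $S^{\perp}\le_{\mathrm{cx}} S$ as a black box from Boutsikas--Vaggelatou. As far as I know, that comparison holds for associated variables with arbitrary real values.

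Your hybrid argument swaps one coordinate at a time. It uses that $X_j$ is $\{0,1\}$-valued to linearize $\psi(W+X_j)=\psi(W)+X_j\,\Delta\psi(W)$, so each swap changes the expectation by exactly $\Cov(X_j,\Delta\psi(W))$. This is nonnegative because $\Delta\psi$ is nondecreasing. The bookkeeping checks out:
\begin{itemize}
\item The block $(X_1,\dots,X_n)$ and the individual $X_i^{\perp}$ are mutually independent, so $T$ can be integrated out and $\E\Bk*{X_j^{\perp}\Delta\psi(W)}$ factors.
\item The step $j=1$ is a trivial equality.
\item All expectations are finite since every hybrid sum lies in $\{0,\dots,n\}$.
\end{itemize}

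Your route gives a short, self-contained proof. It also uses less than full positive association: it only needs $\Cov(X_j,f(X_1+\cdots+X_{j-1}))\ge 0$ for nondecreasing $f$. The citation gives generality, since it covers non-binary variables where your linearization is unavailable. There you would instead use that $(w,x)\mapsto\psi(w+x)$ is supermodular, together with positive quadrant dependence of $(W,X_j)$. Since \cref{lem:covariance-tail-bound} only concerns $\{0,1\}$-valued variables, your argument suffices to make that section fully self-contained.
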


Let $a=\mu/2$ and set
\[
    h(x)=\bk*{\max\{0,1-x/a\}}^2,
    \qquad
    g(x)=\frac{x^2}{a^2}-h(x).
\]
The definition is chosen so that $h$ detects values below $a$, while $g$ is convex. Indeed,
\[
    g(x)=
    \begin{cases}
      2x/a-1, & x\le a,\\
      x^2/a^2, & x\ge a.
    \end{cases}
\]
The two pieces have the same value $1$ and the same derivative $2/a$ at $x=a$, and the derivative of the quadratic piece is nondecreasing. Hence $g$ is convex. Applying Claim \ref{claim:convex-order-inequality} gives $\E\Bk*{g(S^{\perp})} \le\E\Bk*{g(S)}$. Substituting the definition of $g$ into this inequality and rearranging gives
\begin{equation}
    \E\Bk*{h(S)} \le \E\Bk*{h(S^{\perp})} + \frac{\E\Bk*{S^2-(S^{\perp})^2}}{a^2}.
    \label{eq:convex-order-inequality-inequality}
\end{equation}
Observe that
\begin{align*}
    \E\Bk*{S^2-(S^{\perp})^2}
    &=\Var(S)-\Var(S^{\perp}) \tag{since $\E\Bk*{S}=\E\Bk*{S^{\perp}}$}\\
    &=\bk*{\sum_{i=1}^{n}\Var(X_i)+2\sum_{1\le i<j\le n}\Cov(X_i,X_j)}
      -\sum_{i=1}^{n}\Var(X_i^{\perp}) \\
    &=2\sum_{1\le i<j\le n}\Cov(X_i,X_j) \tag{since $\Var(X_i)=\Var(X_i^{\perp})$} \\
    & =2d. 
\end{align*}
Substituting this into \eqref{eq:convex-order-inequality-inequality} gives
\begin{equation}
    \E\Bk*{h(S)} \le \E\Bk*{h(S^{\perp})} + \frac{2d}{a^2}.
    \label{eq:convex-order-inequality-inequality-2}
\end{equation}

Because $S$ and $S^{\perp}$ are nonnegative, $h(S)\ge\ind\nolimits_{\BK*{S=0}}$ and $h(S^{\perp})\le\ind\nolimits_{\BK*{S^{\perp}<a}}$. Combining these inequalities with \eqref{eq:convex-order-inequality-inequality-2} yields
\[
    \Pr[S=0]\le\Pr[S^{\perp}<a]+\frac{2d}{a^2}.
\]
Finally, $S^{\perp}$ is a sum of independent Bernoulli random variables with mean $\mu=2a$, so a Chernoff bound gives
\[
    \Pr[S^{\perp}<a]
    =\Pr\Bk*{S^{\perp}<\frac{\mu}{2}}
    \le e^{-\mu/8}.
\]
Substituting $a=\mu/2$ makes $2d/a^2=8d/\mu^2$ and proves the lemma.
\end{proof}

Combining this estimate with the covariance bound of \cref{prop:small-covariances-along-probe-sequence} gives Proposition \ref{prop:tail-via-R}.

\proptailviaR*

\begin{proof}
Since $(F_{a^2})_{a=0}^{k-1}$ are positively associated by \cref{lemma:vacancy-mean-and-association}, we can apply Lemma \ref{lem:covariance-tail-bound} to them to get
\begin{align*}
    \Pr[T>k]
    &=\Pr\Bk*{\sum_{a=0}^{k-1}F_{a^2}=0}\\
    &\le \exp\bk*{-\frac{\E\Bk*{\sum_{a=0}^{k-1}F_{a^2}}}{8}}
      +\frac{8\sum_{0\le a<b<k}\Cov(F_{a^2},F_{b^2})}
      {\bk*{\E\Bk*{\sum_{a=0}^{k-1}F_{a^2}}}^2}\\
    &\le e^{-\eps k/8}+\frac{8}{\eps^2k^2}\sum_{0\le a<b<k}C(b^2-a^2). \tag{since $\E\Bk*{\sum_{a=0}^{k-1}F_{a^2}}\ge\eps k$}
\end{align*}
By Proposition \ref{prop:small-covariances-along-probe-sequence}, the covariance sum is at most $R(k)$, so
\[
    \Pr[T>k]\le e^{-\eps k/8}+\frac{8R(k)}{\eps^2k^2}.
\]
It remains to eliminate the exponential term, which we do by observing that it never exceeds the order of the second term. Indeed, $e^{-x/8}=O(x^{-2})$ for $x>0$, since the exponential decays faster than any power; taking $x=\eps k$ gives
\[
    e^{-\eps k/8}=O\bk*{\frac1{\eps^2k^2}} = O\bk*{\frac{R(k)}{\eps^2k^2}}. \qedhere
\]
\end{proof}

%% file: tail-bounds.tex
Now, we return our attention to the fixed-load setting. Throughout the section, we consider a fixed-load table with $m = (1 - \epsilon)n$ keys, and we derive a tail bound on the time $T$ to perform one additional insertion.

\begin{proposition}
  \label{prop:tail-bound-for-T}
  Suppose $n$ is a sufficiently large prime and $1/\sqrt{\log n}<\eps\le1/2$. Let $T$ be the time to perform one additional insertion in a fixed-load table with $m = (1 - \epsilon)n$ keys. Then
  \begin{subnumcases}{\Pr[T>k]\le}
    \dfrac{k^{o(1)}}{\eps^2k^2}, & for $\eps^{-1}<k\le n^{1/2}$,\label{case:divisor}\\
    O\bk*{\dfrac1{\eps^2n}}, & for $k=\ceil*{n^{0.76}}$,\label{case:hyperbola}\\
    \dfrac{3\ln\eps^{-1}}{k}, & for $\eps n\le k\le n/2$.\label{case:long-tail}
  \end{subnumcases}
\end{proposition}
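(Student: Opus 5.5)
We apply the Poissonization reduction (\cref{lem:reduction-to-poissonized}) and the Poissonized tail bound (\cref{prop:tail-via-R}), so that cases \eqref{case:divisor} and \eqref{case:hyperbola} reduce to bounding $R(k)$ number-theoretically. Case \eqref{case:long-tail} needs a separate, more combinatorial argument. Concretely, \cref{lem:reduction-to-poissonized} with $m=(1-\eps)n$ gives $\Pr[T>k]\le 2\Pr[T_N>k]$ for $N\sim\Poisson(m)$. For $1<k\le n/2$, \cref{prop:tail-via-R} then gives $\Pr[T>k]=O(R(k)/(\eps^2k^2))$. It remains to show $R(k)=k^{o(1)}$ for $k\le n^{1/2}$, and $R(k)=O(k^2/n)$ for $k=\lceil n^{0.76}\rceil$.

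\textbf{Case \eqref{case:divisor}.} Write $b^2-a^2=uv$ with $u=b-a$ and $v=b+a$. The map $(a,b)\mapsto(u,v)$ is injective, and $1\le uv<2k\cdot 2k=4k^2\le 4n$. Hence $uv\equiv r\pmod n$ forces $uv=r+jn$ for some $j\in\{0,1,2,3\}$. Each such integer $x<4k^2$ has at most $d(x)$ factorizations, and the classical divisor bound gives $d(x)=x^{O(1/\log\log x)}=k^{o(1)}$. So $R(k)\le 4\max_{x<4k^2}d(x)=k^{o(1)}$. The factor $4$, and the constant hidden in the $O(\cdot)$, are absorbed into $k^{o(1)}$ because $k>\eps^{-1}\to\infty$ in the relevant regime. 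This gives \eqref{case:divisor}.

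\textbf{Case \eqref{case:hyperbola}.} Here $k^2/n=n^{0.52}$, and the divisor bound would lose an $n^{o(1)}$ factor, so we need a sharper count. We count lattice points $(u,v)$ with $1\le u<v<2k$ and $uv\equiv r\pmod n$. We split $[1,2k)^2$ into $O(\log^2 n)$ dyadic boxes $[U,2U)\times[V,2V)$. In each box, the standard Weil/Kloosterman estimate for points on the modular hyperbola gives
\[
  \#\{(u,v)\in\text{box}: uv\equiv r\} \;=\; \frac{UV}{n}+O\bk*{\sqrt n\log^2 n}.
\]
Summing the main terms over the boxes gives $O(k^2/n)=O(n^{0.52})$. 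The error terms total $O(n^{1/2}\operatorname{polylog} n)=o(n^{0.52})$. The exponent $0.76$ is chosen precisely so that $k^2/n$ beats $\sqrt n$. Therefore $R(k)=O(k^2/n)$, and $\Pr[T>k]=O(1/(\eps^2 n))$.

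\textbf{Case \eqref{case:long-tail}.} For $\eps n\le k\le n/2$, I would not use covariances. Instead I would exploit that the first $k$ probes are distinct (\cref{obs:distinct-probes}), together with monotonicity in time. By the coupling in the proof of \cref{lem:reduction-to-poissonized}, $p_j\defeq\Pr[T_j>k]$ is nondecreasing in $j$. Hence $p_m$ is at most the average of $p_j$ over the later insertions $j\in[m,n)$, which we weight harmonically by the number $f=n-j$ of free slots. For a fixed table with $f$ free slots, uniformity of $h$ means that $\Pr[T_j>k]$ is the fraction of shifts $h$ for which $h+\{0^2,\dots,(k-1)^2\}$ misses the free set. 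A double-counting over insertions should then give $\sum_f p_{n-f}/f\lesssim (1/k)\sum_{f}(\dots)$. Summing $1/f$ over $f\in[1,\eps n]$, relative to the scale $n/k$, is what I expect to produce the $\ln\eps^{-1}$ factor. The double-counting uses that each insertion that fails the first $k$ probes has at least $k$ distinct occupied probe slots.

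This last case is the step I expect to be the main obstacle. Getting exactly $3\ln\eps^{-1}/k$ requires the right averaging identity, and my sketch above is the least certain part of the plan. If it fails, I would try a second-moment bound over the random shift $h$, controlling the autocorrelation of the probe set by the same divisor counting as in case \eqref{case:divisor}.
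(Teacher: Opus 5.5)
Your cases \eqref{case:divisor} and \eqref{case:hyperbola} are fine and follow the paper's route: Poissonization and \cref{prop:tail-via-R}, then the divisor bound in the first case and the Weil/Kloosterman modular-hyperbola count (\cref{lem:shparlinski}) in the second. Two steps are unnecessary: the residues $r+jn$ with $j\ge1$ never occur, since $uv=b^2-a^2<k^2\le n$, and the dyadic splitting adds nothing. The genuine gap is \eqref{case:long-tail}, where your averaging plan cannot succeed however the double counting is done. With a single free slot, distinctness of the first $k$ probes gives exactly $p_{n-1}=1-k/n\ge 1/2$. Your harmonic weighting puts normalized mass $1/H_{\eps n}$ on $f=1$, so the average you use to dominate $p_m$ is at least $1/(2H_{\eps n})=\Omega(1/\log n)$. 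The target, by contrast, is $3\ln\eps^{-1}/k\le 3\ln\eps^{-1}/(\eps n)=n^{-1+o(1)}$ for $\eps>1/\sqrt{\log n}$. Monotonicity in $j$ is true but far too lossy here, because nearly full tables dominate any such average; note also that your sketch never uses anything about squares.

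The fallback does not rescue this. Divisor counting is unavailable once $k\ge\eps n\gg n^{1/2}$, since a typical residue is then hit about $k^2/n$ times, as your own case \eqref{case:hyperbola} computation shows. A second moment over the random shift only sees the pairwise overlaps $|Q\cap(Q+d)|$, where $Q=\{i^2\bmod n:0\le i<k\}$. For a generic free set $\Phi$ of size $\eps n$, the count $|(h+Q)\cap\Phi|$ has variance comparable to its mean $\eps k$. So Chebyshev yields at best order $1/(\eps k)$, which is too weak by a factor of order $\eps^{-1}/\ln\eps^{-1}$ even before Weil-type error terms.

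What is missing is the paper's worst-case Covering Lemma (\cref{lem:covering}). For every occupied set $\Lambda$ with $|\Lambda|=(1-\eps)n$, fewer than $L_{k,n}=\lceil 2(n/k)\ln\eps^{-1}\rceil$ hashes $j$ satisfy $Q+j\subseteq\Lambda$. Hence, conditioned on any table state, the probability that $T>k$ is at most $L_{k,n}/n\le 3\ln\eps^{-1}/k$. This follows from showing that any $L_{k,n}$ translates $Q+J$ cover more than $(1-\eps)n$ slots. The proof is inclusion--exclusion over all nonempty $J'\subseteq J$, using the Mak--Zaharescu estimate (\cref{lem:mak}) $|\bigcap_{j\in J'}(Q+j)|=k^{|J'|}/n^{|J'|-1}\pm\sqrt n\exp(O(|J'|\log\log n))$. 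The main terms sum to $n(1-(1-k/n)^{|J|})\ge(1-\eps^2)n$. The errors total $n^{1/2+o(1)}$ because $|J|=O(\eps^{-1}\ln\eps^{-1})=o(\log n/\log\log n)$, which is exactly where $\eps>1/\sqrt{\log n}$ is needed. Control of these $|J|$-fold intersections, not just pairwise overlaps, is what produces the $\ln\eps^{-1}/k$ rate.
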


These three bounds do not cover every $k$, but they will nonetheless suffice for our time analysis in the next section. Since $\Pr[T>k]$ is nonincreasing in $k$, a bound at any single $k$ also applies to every larger $k$; this is what lets us get away with the single value $k=\ceil*{n^{0.76}}$ in case \eqref{case:hyperbola}, and it is also how we will cover the gap between $n^{1/2}$ and $n^{0.76}$ when we sum the tail in \cref{sec:expected-time}.

We obtain cases \eqref{case:divisor} and \eqref{case:hyperbola} by first bounding the quantity $R(k)$, plugging that into \cref{prop:tail-via-R}, and then applying the Poissonization comparison in Lemma \ref{lem:reduction-to-poissonized}. The two cases correspond to two different arithmetic estimates for $R(k)$, and they are handled in \cref{subsec:bounds-for-R(k)}.

It is worth noting that the bounds for $k < \eps n$ are already sufficient to obtain a non-trivial overall bound on insertion time, namely a bound of $\E[T] = O(\epsilon^{-2})$. If, however, we want a bound of $O(\epsilon^{-(1 + o(1))})$, then we must also establish a nontrivial tail bound for $k\ge\eps n$. This final tail bound is established via an entirely different ad-hoc argument. In this extreme parameter regime, we can condition on an arbitrary fixed table state and argue that, no matter what that state is, the randomness of the hash of the next insertion is enough to obtain the bound $3\ln\eps^{-1}/k$. This case is handled in \cref{subsec:long-tail-bound-via-covering-lemma}.

We remark that the arguments in this section are the only parts of the proof that make use of facts specific to quadratic probing (as opposed to other fixed-offset probing schemes).
They also reveal the main reason that quadratic probing is able to avoid the clustering effects of linear probing, which force linear probing to incur $\Omega(\epsilon^{-2})$-time operations: it is because, unlike linear probing where the quantity $R(k)$ grows as $\Theta(k)$, quadratic probing has its probes spread far enough out that $R(k)$ grows (at least initially) at a much slower at a rate of $k^{o(1)}$.

\subsection{Bounds for \texorpdfstring{$R(k)$}{the Multiplicity}, and Proofs of \texorpdfstring{\eqref{case:divisor} and \eqref{case:hyperbola}}{(8a) and (8b}}
\label{subsec:bounds-for-R(k)}
Recall from \eqref{eq:def-R} that $R(k)$ is the largest number of pairs among the first $k$ probe indices that produce the same square difference modulo $n$. That is, $R(k)$ is the largest number, over all residues $r$ with $1\le r<n$, of pairs $(a,b)$ with $0\le a<b<k$ satisfying
\begin{align}
    b^2-a^2\equiv r\pmod n.
    \label{eq:square-difference-congruence}
\end{align}
Both estimates below start from the factorization $b^2-a^2=(b-a)(b+a)$, which turns counting solutions of \eqref{eq:square-difference-congruence} into counting factorizations. Up to $k\le n^{1/2}$, both sides of \eqref{eq:square-difference-congruence} are smaller than $n$, so the congruence is an equality of integers and a single divisor bound suffices. Near $n^{0.76}$, the difference $b^2-a^2$ can exceed $n$, so many integers in the congruence class of $r$ become available and the divisor bound degrades; there we instead use a modular-hyperbola estimate for the solutions of \eqref{eq:square-difference-congruence}.

\paragraph{The divisor bound for $k\le n^{1/2}$.}
Let $\tau(i)$ denote the number of positive divisors of $i$. If $k\le n^{1/2}$, then
\[
    0<b^2-a^2<k^2\le n.
\]
Hence \eqref{eq:square-difference-congruence} is an equality of integers: $b^2-a^2=r$. For a fixed value $r$, every such pair gives the factorization
\[
    r=b^2-a^2=(b-a)(b+a).
\]
The two factors $u=b-a$ and $v=b+a$ determine $a=(v-u)/2$ and $b=(v+u)/2$. Moreover, given $r$, the factor $u$ determines $v = r / u$. The number of options for $(a, b)$ for a given $r$ is therefore at most the number of divisors $u$ of $r$. Taking a maximum over the options for $r$ gives 
\[
    R(k)\le\max_{1\le i<k^2}\tau(i)=k^{o(1)},
\]
where the last estimate is the standard maximal-order bound for the divisor function~\cite{hardywright}. Substituting this estimate into \eqref{eq:tail-via-R} and then applying the Poissonization comparison \eqref{eq:poisson-comparison} shows that, in the fixed-load model,
\[
    \Pr[T>k]\le\frac{k^{o(1)}}{\eps^2k^2}
    \qquad\text{for }\eps^{-1}<k\le n^{1/2},
\]
which is case \eqref{case:divisor} of \cref{prop:tail-bound-for-T}.

\paragraph{The modular-hyperbola bound at $k=\ceil*{n^{0.76}}$.}
The divisor bound above relies on $b^2-a^2$ being smaller than $n$, so it gives nothing once $k$ exceeds $n^{1/2}$. For larger $k$ we use the following modular-hyperbola estimate instead.

\begin{lemma}
  \label{lem:modular-multiplicity}
  Let $n$ be an odd prime and let $1 < k\le n/2$. Then
  \[
      R(k)=O\bk*{\frac{k^2}{n}+n^{1/2+o(1)}}.
  \]
\end{lemma}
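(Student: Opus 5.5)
We will bound, for a fixed residue $r\not\equiv 0$, the number of pairs $(a,b)$ with $0\le a<b<k$ and $b^2-a^2\equiv r\pmod n$. The plan is to change variables to $u=b-a$ and $v=b+a$, which satisfy $1\le u<k$ and $1\le v<2k$, with $u\equiv v\pmod 2$. Since $n$ is odd, the map $(a,b)\mapsto(u,v)$ is injective. The congruence becomes
\[
    uv\equiv r\pmod n .
\]
So it suffices to count lattice points $(u,v)$ in the box $[1,k)\times[1,2k)$ that lie on the modular hyperbola $uv\equiv r\pmod n$. Because $r\not\equiv0$ and $n$ is prime, every solution has $u\not\equiv 0$, and $v\equiv r u^{-1}$ is determined by $u$ modulo $n$. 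Since $k\le n/2$, each $u$ gives at most two values of $v$ in $[1,2k)$. This yields only the trivial bound $O(k)$, so the real work is to do better.

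The main step is the standard completion-of-sums estimate for the hyperbola. Write $\mathbf{1}[uv\equiv r]=\frac1n\sum_{h\in\mathbb{Z}_n}e\bigl(h(uv-r)/n\bigr)$, where $e(x)=e^{2\pi i x}$. The $h=0$ term contributes $|U||V|/n=O(k^2/n)$, which is the first term of the claimed bound. For $h\ne0$, sum over $v$ in the interval $[1,2k)$ first. This gives a geometric sum of size at most $\min(2k,\|hu/n\|^{-1})$, and the problem reduces to a sum over $u$ weighted by $e(-hr/n)$.

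Used naively, that route loses the needed cancellation. I would instead take the cleaner formulation via Kloosterman sums. Expand the indicator of the interval $[1,k)$ for $u$ and the interval $[1,2k)$ for $v$ in additive characters. Then
\[
    \#\{(u,v)\}=\frac{|U||V|}{n}+\frac{1}{n^2}\sum_{(s,t)\ne(0,0)}\hat{\mathbf 1}_U(s)\,\hat{\mathbf 1}_V(t)\,K(s,t;r),
\]
where $K$ is a complete Kloosterman-type sum $\sum_{u\in\mathbb{Z}_n^*}e\bigl((su+t r u^{-1})/n\bigr)$. Weil's bound gives $|K|\le 2\sqrt n$ when $st\ne0$. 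The degenerate terms with exactly one of $s,t$ equal to $0$ are Ramanujan sums of absolute value $1$. The Fourier coefficients satisfy $\sum_s|\hat{\mathbf 1}_U(s)|=O(n\log n)$, so the error term is $O(\sqrt n\log^2 n)=n^{1/2+o(1)}$. The degenerate terms contribute $O(\log n)$. The parity constraint $u\equiv v\pmod 2$ can only decrease the count, so we may ignore it. Taking the maximum over $r$ gives $R(k)=O(k^2/n+n^{1/2+o(1)})$.

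The main obstacle is justifying the Weil bound for the Kloosterman sums, which I would simply cite as a standard result; everything else is routine Fourier bookkeeping. If we wanted to avoid Weil, I would fall back on an elementary argument. Split $u$ into blocks of length $L$, and count, for each pair of $u$'s in a block, coincidences via the divisor bound on $(u-u')v\equiv \ldots$. This gives the weaker bound $O(k^2/n + n^{1/2+o(1)}\cdot\text{polylog})$, which is equally acceptable here since only the $n^{1/2+o(1)}$ form matters at $k=\lceil n^{0.76}\rceil$.
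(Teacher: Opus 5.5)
Your proof is correct and follows the paper's: the same substitution $(a,b)\mapsto(b-a,\,b+a)$ and injectivity, then discarding the parity constraint and counting points of $uv\equiv r \pmod n$ in a box whose sides are shorter than $n$. The only difference is that the paper quotes this box count as a black box (Theorem 13 of Shparlinski's survey), while you derive it by completing the sums and applying the Weil bound to the resulting Kloosterman sums; this is the standard proof of that theorem and gives the explicit error $O(\sqrt n\log^2 n)$, and the elementary fallback in your last paragraph is neither needed nor actually justified as sketched.
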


\begin{proof}
We use the following standard estimate for points on a modular hyperbola.

\begin{lemma}[Theorem 13 of \cite{shparlinski2012modular}, rephrased]
  \label{lem:shparlinski}
  Let $n$ be prime and let $r\not\equiv0\pmod n$. For integers $U,V\in[0,n)$ and $1\le X,Y<n$,
  \[
      \abs*{\{(i,j):1\le i\le X,\ 1\le j\le Y,\ (U+i)(V+j)\equiv r\pmod n\}}=O\bk*{\frac{XY}{n}+n^{1/2+o(1)}}.
  \]
\end{lemma}

Fix a residue $r$ with $1\le r<n$. For every pair $(a,b)$ satisfying \eqref{eq:square-difference-congruence}, set
\[
    u=a+b,
    \qquad
    v=b-a.
\]
Then $1\le u<2k$, $1\le v<k$, and
\[
    uv=(a+b)(b-a)=b^2-a^2\equiv r\pmod n.
\]
The map $(a,b)\mapsto(u,v)$ is injective because $a=(u-v)/2$ and $b=(u+v)/2$. We may therefore discard the parity and triangular constraints on $(u,v)$ and count all pairs in the containing rectangle. Since $k\le n/2$ implies $2k-1<n$, \cref{lem:shparlinski}, with $U=V=0$, $X=2k-1$, and $Y=k-1$, gives
\[
\begin{aligned}
    &\abs*{\{(a,b):0\le a<b<k,\ b^2-a^2\equiv r\pmod n\}}\\
    \le{}&\abs*{\{(u,v):1\le u<2k,\ 1\le v<k,\ uv\equiv r\pmod n\}}\\
    ={}&O\bk*{\frac{k^2}{n}+n^{1/2+o(1)}}.
\end{aligned}
\]
Taking the maximum over $r$ proves the lemma.
\end{proof}

Now take $k=\ceil*{n^{0.76}}$, so that $k^2/n\ge n^{0.52}$. The ratio of the error term in \cref{lem:modular-multiplicity} to $k^2/n$ is then at most
\[
    \frac{n^{1/2+o(1)}}{n^{0.52}}=o(1),
\]
so $R(k)=O(k^2/n)$. Substituting into \eqref{eq:tail-via-R} and applying \eqref{eq:poisson-comparison} gives
\[
    \Pr[T>k]=O\bk*{\frac{k^2/n}{\eps^2k^2}}
    =O\bk*{\frac1{\eps^2n}},
\]
which is case \eqref{case:hyperbola} of \cref{prop:tail-bound-for-T}.

\subsection{Proof of \texorpdfstring{\eqref{case:long-tail}}{(8c)} via a Covering Lemma}
\label{subsec:long-tail-bound-via-covering-lemma}
The covariance method used to prove \eqref{case:divisor} and \eqref{case:hyperbola} fixes the new key's hash and averages over the random table state. To prove the final case \eqref{case:long-tail}, we reverse the roles of the randomness: we condition on an arbitrary occupied set and use only that the hash of the next key is uniformly random. The Covering Lemma below shows that, no matter what set of $(1 - \epsilon)n$ slots are occupied, there are at most a very small number of options for the hash $h(x)$ of the next insertion such that the insertion takes a very long time. 

Throughout this subsection we add sets of slots, always modulo the table size: for $A,B\subseteq[n]$, their set sum is defined as
\[
    A+B\defeq\{a+b\bmod n:a\in A,\ b\in B\}.
\]

\begin{lemma}[The Covering Lemma]
  \label{lem:covering}
  Let $n$ be a sufficiently large odd prime, let $0<\eps\le1/2$ with $\eps>1/\sqrt{\log n}$, and let $\eps n\le k\le n/2$. Set $L_{k,n}=\ceil*{2(n/k)\ln\eps^{-1}}$. For every set $\Lambda\subseteq[n]$ of occupied slots with $\abs{\Lambda}=(1-\eps)n$, there are fewer than $L_{k,n}$ hashes $j\in[n]$ such that
  \[
      \{0^2,1^2,\ldots,(k-1)^2\}+j\subseteq\Lambda.
  \]
\end{lemma}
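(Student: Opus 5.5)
The plan is to argue by contradiction. We will show that \emph{every} set $J\subseteq[n]$ of exactly $L\defeq L_{k,n}$ hashes satisfies
\[
    \abs*{(J+Q_k)}>(1-\eps)n, \qquad Q_k\defeq\{0^2,1^2,\ldots,(k-1)^2\}.
\]
If there were $L$ bad hashes $J$, then $J+Q_k\subseteq\Lambda$, so $\abs{J+Q_k}\le(1-\eps)n$, which would contradict this claim. It therefore suffices to bound the number of uncovered slots
\[
    U(J)\defeq\abs*{\{x\in[n]:x-j\notin Q_k\text{ for all }j\in J\}}
\]
by something strictly less than $\eps n$. Heuristically, if the translates $j+Q_k$ behaved independently, each would cover a $k/n$ fraction of the slots. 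Then $U(J)\approx n(1-k/n)^L\le n e^{-kL/n}\le n e^{-2\ln\eps^{-1}}=\eps^2 n$, and the choice of $L_{k,n}$ is designed to make exactly this calculation work. The proof must turn the heuristic into an exact estimate with an error term of order $o(\eps n)$.

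\textbf{Step 1: inclusion--exclusion.} Write $\ind_Q$ for the indicator of $Q_k$. Expand
\[
    U(J)=\sum_{x\in[n]}\prod_{j\in J}\bigl(1-\ind_Q(x-j)\bigr)=\sum_{S\subseteq J}(-1)^{\abs S}N(S),
    \qquad N(S)\defeq\abs*{\{x:x-j\in Q_k\ \forall j\in S\}}.
\]
The plan is to show that, for every nonempty $S$ of size $s\le L$,
\[
    N(S)=n(k/n)^s+E_S,\qquad \abs{E_S}\le 2^{O(s)}n^{1/2}(\log n)^{s}.
\]
The main terms then sum exactly to $n(1-k/n)^L\le\eps^2 n$. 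The total error is at most $2^{L}\cdot 2^{O(L)}n^{1/2}(\log n)^L$. Since $k\ge\eps n$ and $\eps>1/\sqrt{\log n}$, we have $L=O(\eps^{-1}\ln\eps^{-1})=O(\sqrt{\log n}\,\log\log n)$. Hence the error is $n^{1/2+o(1)}$, which is $o(\eps n)$. Together this gives $U(J)\le\eps^2n+o(\eps n)<\eps n$ for $\eps\le1/2$ and $n$ large, which proves the lemma.

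\textbf{Step 2: counting $N(S)$.} This is the main obstacle. Since $k\le n/2$, \cref{obs:distinct-probes} implies that each element of $Q_k$ has a unique root $a\in[0,k)$. Therefore $N(S)$ equals the number of tuples $(a_j)_{j\in S}\in[0,k)^s$ satisfying $a_j^2+j\equiv a_{j'}^2+j'$ for all $j,j'\in S$. Fix $j_0\in S$. These are the $\mathbb{F}_n$-points, restricted to a box, of the curve
\[
    \mathcal C_S:\quad a_j^2-a_{j_0}^2=j_0-j\quad (j\in S\setminus\{j_0\}).
\]
This is a complete intersection of quadrics whose constants $j_0-j$ are distinct and nonzero. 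I would first verify that $\mathcal C_S$ is an absolutely irreducible, smooth curve of degree $2^{s-1}$. One can check this via the projection $a_{j_0}\mapsto$ a tower of Kummer extensions $\mathbb{F}_n(a_{j_0})\bigl(\sqrt{a_{j_0}^2+c_j}\bigr)$ with pairwise non-associated squarefree radicands $a_{j_0}^2+c_j$. Its genus is then $2^{O(s)}$.

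Next I would detect the box $[0,k)^s$ with additive characters. Completing the incomplete sums in each coordinate costs a factor of $O(\log n)$ per coordinate. This reduces $N(S)$ to the main term $k^s/n^{s-1}$ plus a sum of complete exponential sums $\sum_{P\in\mathcal C_S(\mathbb{F}_n)}\psi(\xi\cdot P)$ with $\xi\neq0$. The Weil (Bombieri) bound bounds each such sum by $2^{O(s)}\sqrt n$, because a nonzero linear form is nonconstant on $\mathcal C_S$. This yields the claimed error term.

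\textbf{Where care is needed.} The delicate points are the following.
\begin{itemize}
  \item All constants in the Weil bound and in the irreducibility argument must be uniform in $s$, with growth at most $2^{O(s)}$, because $s$ ranges up to $L$, which grows with $n$.
  \item The linear forms $\xi\cdot P$ must never be constant on $\mathcal C_S$.
\end{itemize}
If the uniform Weil estimate proves too lossy, a fallback is to truncate the inclusion--exclusion. The Bonferroni inequalities stopped at an even depth $s_0\approx C\ln\eps^{-1}\cdot n/k$ give an upper bound on $U(J)$ directly. Provided the alternating tail is controlled by the main terms, this would need the uniform estimate only for $s\le s_0$.
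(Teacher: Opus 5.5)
Your proposal is correct and matches the paper's proof step for step: the same contrapositive over a set $J$ of $L_{k,n}$ shifts, the same inclusion--exclusion with main terms summing to $n(1-k/n)^{|J|}\le\eps^2 n$, and the same error budget $\sqrt n\exp(O(|J|\log\log n))=n^{1/2+o(1)}$, using $|J|=O(\sqrt{\log n}\,\log\log n)$. The only difference is that the paper does not re-derive your Step~2; it imports it as \cref{lem:mak}, which is Lemma~3.1 of \cite{makzaharescu2011poisson} and is proved by the Weil-bound-plus-completion argument you sketch, so your uniformity-in-$s$ concerns are settled there (your $\xi=0$ term additionally needs $|\mathcal C_S(\mathbb{F}_n)|=n+O(2^{O(s)}\sqrt n)$, which follows from the same estimates).
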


\begin{proof}
Write
\[
    Q=\{i^2\bmod n:0\le i<k\}.
\]
We wish to show that there are fewer than $L_{k,n}$ hashes $j\in[n]$ such that $Q+j\subseteq\Lambda$. It suffices to prove the contrapositive: for every set $J\subseteq[n]$ of $L_{k,n}$ distinct shifts, the union $Q+J=\bigcup_{j\in J}(Q+j)$ has more than $(1-\eps)n$ elements. 

We estimate the size of the union $Q + J$ using the inclusion-exclusion principle, using the following intersection bound as a building block.\footnote{To convert Lemma 3.1 of \cite{makzaharescu2011poisson} to our setting, we may use their lemma as follows. Take their prime to be $n$ and their hyperelliptic curve $y^2=f(x)$ to be $y^2=x$, so that $f$ is not a square and $d=\deg f=1$. Take their $y$-coordinate interval to be $\mathcal{I}=[0,k)$, so that their set of admissible $x$-coordinates is $\{y^2:0\le y<k\}=Q$. Their Section~3 carries the standing assumption $\mathcal{I}\subseteq[0,(n-1)/2]$, which our hypothesis $k\le n/2$ supplies. Take their auxiliary interval $\mathcal{J}$ to be all of $[0,n)$, which makes the condition on their rational function $g$ vacuous ($g$ is a constant); and use Remark~3.2 to drop the requirement that $\{1,g_1,\ldots,g_r\}$ be linearly independent. Finally, write $J=\{j_0,j_1,\ldots,j_r\}$ and translate $j_0$ to $0$, so that their shift set is $\mathcal{H}=\{j_0-j_1,\ldots,j_0-j_r\}$ and their parameter is $r=\abs{J}-1$. Their main term $\abs{\mathcal{I}}^{r+1}\abs{\mathcal{J}}^{r}/n^{2r}$ is then $k^{\abs{J}}/n^{\abs{J}-1}$, and their error term $2^{3r+2}d(2^rd-1)\sqrt n\log^{2r+2}n+O(\sqrt n\log^{2r+1}n)$ is $\sqrt n\exp(O(\abs{J}\log\log n))$.}

\begin{lemma}[Lemma 3.1 of \cite{makzaharescu2011poisson}, rephrased]
  \label{lem:mak}
  Let $n$ be an odd prime, let $1\le k\le n/2$, and let $J\subseteq[n]$ satisfy $\abs{J}=o(\log n/\log\log n)$. Then
  \[
      \abs*{\bigcap_{j\in J}(Q+j)}=\frac{k^{\abs{J}}}{n^{\abs{J}-1}}\pm O\bk*{\sqrt n\exp\bk*{O(\abs{J}\log\log n)}}.
  \]
\end{lemma}

The assumptions $k\ge\eps n$ and $\eps>1/\sqrt{\log n}$ imply
\begin{equation}
    L_{k,n}  =\ceil*{2(n/k)\ln\eps^{-1}} \le2\eps^{-1}\ln\eps^{-1}+1
    =O\bk*{\sqrt{\log n}\,\log\log n}
    =o\bk*{\frac{\log n}{\log\log n}}.
    \label{eq:L-bound}
\end{equation}
Consequently, \cref{lem:mak} applies to every nonempty subset $J'\subseteq J$. For each such $J'$, write
\[
    \abs*{\bigcap_{j\in J'}(Q+j)}
    =\frac{k^{\abs{J'}}}{n^{\abs{J'}-1}}+E_{J'},
    \qquad
    \abs{E_{J'}}\le\sqrt n\exp\bk*{O(\abs{J'}\log\log n)}.
\]
The inclusion-exclusion principle gives
\[
\begin{aligned}
    \abs{Q+J}
    &=\sum_{\emptyset\ne J'\subseteq J}(-1)^{\abs{J'}-1}
      \abs*{\bigcap_{j\in J'}(Q+j)}\\
    &\ge\underbrace{\sum_{\emptyset\ne J'\subseteq J}(-1)^{\abs{J'}-1}z
      \frac{k^{\abs{J'}}}{n^{\abs{J'}-1}}}_{\text{main term}}
      -\underbrace{\sum_{\emptyset\ne J'\subseteq J}\abs{E_{J'}}}_{\text{error}}.
\end{aligned}
\]
We bound the main term from below and the error from above, and then combine the two.

We evaluate the main term by grouping subsets according to their size. For each $s$, there are $\binom{\abs{J}}s$ subsets $J'$ of size $s$, and $k^s/n^{s-1}=n(k/n)^s$. Therefore
\[
\begin{aligned}
    \sum_{\emptyset\ne J'\subseteq J}(-1)^{\abs{J'}-1}
      \frac{k^{\abs{J'}}}{n^{\abs{J'}-1}}
    &=n\sum_{s=1}^{\abs{J}}(-1)^{s-1}\binom{\abs{J}}s\bk*{\frac kn}^s\\
    &=n\bk*{1-\bk*{1-\frac kn}^{\abs{J}}},
\end{aligned}
\]
where the last equality is the binomial identity $\sum_{s=0}^{\abs{J}}(-1)^s\binom{\abs{J}}s x^s=(1-x)^{\abs{J}}$, with the $s=0$ term moved to the other side. By the definition of $L_{k,n}$ and the inequality $1-x\le e^{-x}$,
\[
    \bk*{1-\frac kn}^{\abs{J}}
    \le e^{-k\abs{J}/n}
    \le e^{-2\ln\eps^{-1}}
    =\eps^2,
\]
so the main term is at least $(1-\eps^2)n$.

Next we bound the error term. Every nonempty subset has size at most $\abs{J}$, and there are fewer than $2^{\abs{J}}$ such subsets, so
\[
\begin{aligned}
    \sum_{\emptyset\ne J'\subseteq J}\abs{E_{J'}}
    &\le\sqrt n\,2^{\abs{J}}\exp\bk*{O(\abs{J}\log\log n)}\\
    &=\sqrt n\exp\bk*{O(\abs{J}\log\log n)}.
\end{aligned}
\]
By \eqref{eq:L-bound},
\[
    \abs{J}\log\log n
    =o(\log n),
\]
so the error is $n^{1/2+o(1)} = \epsilon n^{1 - \Omega(1)}$, where the final equality uses $\epsilon \ge 1/\sqrt{\log n}$.

Combining the main and error terms,
\[
    \abs{Q+J}\ge(1-\eps^2)n-\epsilon n^{1 - \Omega(1)},
\]
which, for sufficiently large $n$, is strictly greater than $(1-\eps)n$.
\end{proof}

Condition on any fixed occupied set $\Lambda$ of size $(1-\eps)n$. The event $T>k$ means that the uniformly random hash of the next key is one of the bad shifts counted in \cref{lem:covering}. Therefore
\begin{align}
    \Pr[T>k\mid\Lambda]
    &\le\frac{L_{k,n}}n\notag\\
    &\le\frac{2\ln\eps^{-1}}k+\frac1n\notag\\
    &\le\frac{3\ln\eps^{-1}}k.
    \label{eq:covering-tail}
\end{align}
For the last line, $k\le n/2$ gives $1/n\le1/(2k)$, and $\eps\le1/2$ gives $\ln\eps^{-1}\ge\ln2>1/2$, so $1/(2k)\le(\ln\eps^{-1})/k$. Averaging over the fixed-load table state preserves the same bound. This proves case \eqref{case:long-tail} of \cref{prop:tail-bound-for-T} and completes its proof.

%% file: expected-time.tex
We now assemble the tail bounds of \cref{sec:tail-bounds} into a bound on the expected insertion time, allowing us to complete the proof of \cref{thm:main}.

\thmmain*

\begin{proof}
We may assume without loss of generality that $\eps\le1/2$ since, by \cref{lem:occupied-set-monotonicity}, the expected insertion time is nonincreasing in $\eps$; the $O(1)$ bound for $\eps=1/2$ therefore implies an $O(1)$ bound for every larger $\eps$.

Let $T$ be the insertion time. Every insertion finishes within $2n$ probes, so
\[
    \E\Bk*{T}=\sum_{k=0}^{2n-1}\Pr[T>k],
\]
and we bound these terms range by range, using throughout that $\Pr[T>k]$ is nonincreasing in $k$.

Each of the terms with $k\le\eps^{-1}$ is at most $1$, so
\begin{equation}
    \sum_{k\le\eps^{-1}}\Pr[T>k] \le \eps^{-1}+1 = O(\eps^{-1}).
    \label{eq:small-k}
\end{equation}
By case \eqref{case:divisor} and the fact that $\sum_{k>\eps^{-1}}k^{-2+o(1)}=\eps^{1-o(1)}$, we have
\begin{equation}
    \sum_{\eps^{-1}<k\le n^{1/2}}\Pr[T>k]
    \le \sum_{\eps^{-1}<k\le n^{1/2}}\frac{k^{o(1)}}{\eps^2k^2}
    =\frac{\eps^{1-o(1)}}{\eps^2}
    =\eps^{-(1+o(1))}.
    \label{eq:divisor-range}
\end{equation}
Note that this is the only range that contributes more than $O(\eps^{-1})$.

For $n^{1/2}<k<\ceil*{n^{0.76}}$, we can apply case \eqref{case:divisor} at $k=\floor{n^{1/2}}$ to bound
\[
    \Pr[T>k] \le \Pr\Bk*{T>\floor{n^{1/2}}} \le \frac{n^{o(1)}}{\eps^2n}.
\]
There are fewer than $n^{0.76}$ such terms, so they contribute at most
\begin{equation}
    \sum_{n^{1/2}<k<\ceil*{n^{0.76}}}\Pr[T>k]
    \le \frac{n^{0.76+o(1)}}{\eps^2n}
    =o(1).
    \label{eq:middle-range}
\end{equation}
Likewise, case \eqref{case:hyperbola} bounds $\Pr[T>k]$ by $O\bk*{1/(\eps^2n)}$ for every $k\ge\ceil*{n^{0.76}}$, so
\begin{equation}
    \sum_{\ceil*{n^{0.76}}\le k<\eps n}\Pr[T>k]
    =O\bk*{\frac{\eps n}{\eps^2n}}
    =O(\eps^{-1}).
    \label{eq:hyperbola-range}
\end{equation}

Finally, case \eqref{case:long-tail} applies to every $k$ with $\eps n\le k\le n/2$, and the harmonic sum over that range satisfies $\sum_{\eps n\le k\le n/2}1/k\le1+\ln\eps^{-1}$. Hence
\begin{equation}
    \sum_{\eps n\le k\le n/2}\Pr[T>k]
    \le\sum_{\eps n\le k\le n/2}\frac{3\ln\eps^{-1}}{k}
    \le3\ln\eps^{-1}\bk*{1+\ln\eps^{-1}}
    =O(\eps^{-1}).
    \label{eq:long-tail-range}
\end{equation}
Applying that same case at $k=\floor{n/2}$, we can bound the remaining $2n$ terms by
\begin{equation}
    \sum_{n/2<k<2n}\Pr[T>k]
    \le \sum_{n/2<k<2n} \Pr\Bk*{T>\floor{n/2}}
    \le  \sum_{n/2<k<2n} O\bk*{\ln\eps^{-1}/n}
    =O(\eps^{-1}).
    \label{eq:completion-range}
\end{equation}

Summing \eqref{eq:small-k}--\eqref{eq:completion-range} gives $\E\Bk*{T}=\eps^{-(1+o(1))}$.
\end{proof}